\documentclass[10pt,conference]{IEEEtran}
\usepackage{graphicx}
\usepackage{amsthm}
\usepackage{epsfig}
\usepackage{latexsym}
\usepackage{amsfonts}
\usepackage{here}
\usepackage{rawfonts}
\usepackage[T1]{fontenc}
\usepackage{calc}
\usepackage{capitalgreekitalic}
\usepackage{url}
\usepackage{enumerate}
\usepackage{color}
\usepackage[tbtags]{amsmath}
\usepackage{amssymb}
\usepackage{upref}
\usepackage{epic,eepic}
\usepackage{times}
\usepackage{dsfont}
\usepackage{comment}
\usepackage{cite}
\usepackage{mathrsfs}
\usepackage{verbatim}
\usepackage{float}
\usepackage{chngcntr}
\usepackage{bbm}
\usepackage{hyperref}
\allowdisplaybreaks
\usepackage{caption}
\usepackage{subcaption}
\allowdisplaybreaks
\usepackage{algorithm}
\usepackage{algpseudocode}
\DeclareMathOperator*{\argmax}{argmax}












\newtheorem{theorem}{\bf Theorem}
\newtheorem{proposition}{\bf Proposition}
\let\oldproposition\proposition
\renewcommand{\proposition}{\oldproposition\normalfont}

\newtheorem{definition}{\bf Definition}
\let\olddefinition\definition
\renewcommand{\definition}{\olddefinition\normalfont}



\newlength{\totlinewidth}
\newcounter{substep}

	{\end{list}}

\newlength{\aligntop}
\setlength{\aligntop}{-0.53em}
\newlength{\alignbot}
\setlength{\alignbot}{-0.85\baselineskip}
\addtolength{\alignbot}{-0.1em} \makeatletter

\IEEEoverridecommandlockouts



\begin{document}
\title{\huge  Inter-Operator Resource Management for Millimeter Wave, Multi-Hop Backhaul Networks}
\author{
\authorblockN{Omid Semiari$^\dag$, Walid Saad$^\dag$, Mehdi Bennis$^\ddag$, and Zaher Dawy$^*$\\}
\authorblockA{\small $^\dag$Wireless@VT, Bradley Department of Electrical and Computer Engineering, Virginia Tech, Blacksburg, VA, USA,\\
	  Email: \protect\url{{osemiari, walids}@vt.edu}\\
\small $^\ddag$Centre for Wireless Communications-CWC, University of Oulu, Finland, Email: \url{bennis@ee.oulu.fi}\\
\small $^*$  Department of Electrical and Computer Engineering, American University of Beirut, Lebanon, Email: ~\url{zd03@aub.edu.lb}
    \thanks{This research was supported by the U.S. National Science Foundation under Grants CNS-1460316 and CNS-1617896.}%
  }
\maketitle
\begin{abstract}
In this paper, a novel framework is proposed for optimizing the operation and performance of a large-scale, multi-hop millimeter wave (mmW) backhaul within a wireless small cell network (SCN) that encompasses multiple mobile network operators (MNOs). The proposed framework enables the small base stations (SBSs) to jointly decide on forming the multi-hop, mmW links over backhaul infrastructure that belongs to multiple, independent MNOs, while properly allocating resources across those links.  In this regard, the problem is addressed using a novel framework based on matching theory that is composed to two, highly inter-related stages: a multi-hop network formation stage and a resource management stage. One unique feature of this framework is that it jointly accounts for both wireless channel characteristics and economic factors during both network formation and resource management. The multi-hop network formation stage is formulated as a one-to-many matching game which is solved using a novel algorithm, that builds on the so-called deferred acceptance algorithm and is shown to yield a stable and Pareto optimal multi-hop mmW backhaul network. Then, a one-to-many matching game is formulated to enable proper resource allocation across the formed multi-hop network. This game is then shown to exhibit peer effects and, as such, a novel algorithm is developed to find a stable and optimal resource management solution that can properly cope with these peer effects. Simulation results show that, with manageable complexity, the proposed framework yields substantial gains, in terms of the average sum rate, reaching up to $27\%$ and $54\%$, respectively, compared to a non-cooperative scheme in which inter-operator sharing is not allowed and a random allocation approach. The results also show that our framework improves the statistics of the backhaul sum rate and provides insights on how to  manage pricing and the cost of the cooperative mmW backhaul network for the MNOs.
\end{abstract}

\section{Introduction}\label{intro}
Network densification based on the concept of small cell networks (SCNs) is seen as the most promising solution to cope with the increasing demand for wireless capacity \cite{17}. SCNs are built on the premise of a viral and dense deployment of small base stations (SBSs) over large geographical areas so as to reduce the coverage holes and improve the spectral efficiency\cite{Quek13}. However, such a large-scale deployment of SBSs faces many challenges in terms of resource management, network modeling, and backhaul support  \cite{Quek13}.

In particular, providing backhaul support for a large number of SBSs that can be deployed at adverse locations within a geographical area has emerged as one of the key challenges facing the effective operation of future heterogeneous SCNs  \cite{interdigital}. In particular, due to the density of SCNs, mobile network operators (MNOs) will not be able to maintain an expensive and costly deployment of fiber backhauls to service SBSs as shown in \cite{Hur13} and \cite{interdigital}. Instead, MNOs are moving towards the adoption of wireless backhaul solutions that are viewed as an economically viable approach to perform backhauling in dense SCNs. In fact, MNOs expect that $80 \%$ of SBSs will connect to the core network via wireless backhaul as detailed in \cite{interdigital} and \cite{6189405}.

Existing works have proposed a number of solutions for addressing a handful of challenges facing SCN backhauling \cite{Rangan14,Boccardi01,Hur13,Ghosh14,interdigital,6189405,Liu15,7110547,7248860,Liebl01,Yi01,Loumiotis01,7324401,7414178,7324420,7305743,7331861,7194083,Sengupta01,Mahloo01,7389369,interdigital14}. The authors in \cite{Liebl01} propose a fair resource allocation for the out-band relay backhaul links. The proposed approach developed in \cite{Liebl01} aims to maximize the throughput fairness among backhaul and access links in LTE-Advanced relay system. In \cite{Yi01}, a backhaul resource allocation approach is proposed for LTE-Advanced in-band relaying. This approach optimizes resource partitioning between relays and macro users, taking into account both backhaul and access links quality. \textcolor{black}{Dynamic backhaul resource provisioning is another important problem in order to avoid outage in peak traffic hours and under-utilizing frequency resources in low traffic scenarios. In this regard, in \cite{Loumiotis01}, a dynamic backhaul resource allocation approach is developed based on evolutionary game theory. Instead of static backhaul resource allocation, the authors take into account the dynamics of users' traffic demand and allocate sufficient resources to the base stations, accordingly.} Although interesting, the body of work in \cite{Liebl01,Yi01,Loumiotis01} does not consider the potential deployment of millimeter wave communication at the backhaul network and is primarily focused on modeling rather than resource management and multi-hop backhaul communication.

Providing wireless backhaul links for SBSs over \textit{millimeter wave} (mmW) frequencies has recently been dubbed as one of the most attractive technologies for sustaining the backhaul traffic of SCNs \cite{Rangan14,Boccardi01,Hur13,Ghosh14,interdigital,6189405,Liu15,7110547,7248860}, \textcolor{black}{due to the following promising characteristics, among others:
	 1) The mmW spectral band that lies within the range 30-300 GHz will deliver high-capacity backhaul links by leveraging up to $10$ GHz of available bandwidth which is significantly larger than any ultra-wideband system over sub-6 GHz frequency band. In addition, high beamforming gains are expected from mmW antenna arrays, with large number of elements, to overcome path loss \cite{Liu15}, 
	 2) more importantly, mmW backhaul links will not interfere with legacy sub-6 GHz communications in either backhaul or access links, due to operating at a different frequency band. Even if the access network operates over the mmW frequency  band such as in self-backhauling architectures, mmW communications will generally remain less prone to interference, due to the directional transmissions, short-range links, as well as susceptibility to the blockage \cite{6840343},
	  and 3) over the past few years,  research for utilizing mmW frequencies for wireless backhaul networks has become an interesting field that attracted a lot of attention in both academia and industry \cite{Rangan14,Boccardi01,Hur13,Ghosh14,interdigital,6189405,Liu15,7110547,7248860,MiWaveS}. As an example, in 2014, a total of 15 telecom operators, vendors, research centers, and academic institutions (including Nokia, Intel, and operators Orange and Telecom Italia) have launched a collaborative  project in Europe, called \emph{MiWaveS}, to develop mmW communications for 5G backhaul and access networks \cite{MiWaveS}.}
  
\textcolor{black}{However, compared to existing ultra-dense networks over sub-6 GHz band, the major \emph{challenges} of mmW backhaul networks can be listed as follows:
	1) MmW backhaul links will typically operate over much shorter range than their sub-6 Ghz counterparts (usually do not exceed 300 meters \cite{7414178,rappaport2014}), and, thus, more SBSs will be required to provide backhaul support for the users within a certain geographical area. Therefore, mmW SBS deployments are expected to be even denser, compared to the already dense sub-6 GHz networks \cite{7422408}. Such ultra dense network will require fast and efficient network formation algorithms to establish a multi-hop backhaul link between the core network and each demanding SBS, 2) the backhaul network must be significantly reliable. However, the received signal power of mmW signals may significantly degrade if the backhaul link is blocked by an obstacle. For SBSs that are deployed in adverse locations, such as urban furniture, the received signal power may degrade due to rain or blockage by large vehicles. One solution is to increase the density of SBSs such that if a backhaul link between two SBSs is blocked, the demanding SBS can establish a reliable link with another SBS. However, this solution will increase the cost of the backhaul network for the MNO. In our work, we have motivated the use of cooperation between MNOs to achieve a robust and economically efficient backhaul solution, and 	
	3) due to the directional transmissions of the mmW signals, broadcast control channels can lead to a mismatch between the control and data planes at mmW frequency bands \cite{Niu111}. Therefore, fully centralized approaches that rely on receiving control signals from a central station over broadcast channels may not be practical, thus, motivating the adoption of suitable distributed algorithms for an effective resource management.}

Several recent studies have studied the viability of mmW as a backhaul solution as presented in \cite{7110547} and \cite{7194083,7324401,7414178,7324420,7331861,7305743}. For instance, the work in \cite{7110547} proposes a model based on stochastic geometry to analyze the performance of the self-backhauled mmW networks. The work in \cite{7324401} analyzes the performance of a dual-hop backhaul network for mmW small cells. In \cite{7414178}, the authors perform channel measurements and provide insights for the mmW small cell backhaul links. In \cite{7324420}, the performance of  adaptive and switching beamforming techniques
are investigated and evaluated for mmW backhaul networks. Moreover, the impact of diffraction loss in mmW backhaul network is analyzed in \cite{7331861}. The authors in \cite{7305743} propose a multi-objective optimization framework for joint deployment of small cell base stations and wireless backhaul links. In \cite{7194083}, the authors propose an autonomous beam alignment technique for self-organizing multi-hop mmW backhaul networks.
\textcolor{black}{In \cite{7422408}, the authors have motivated the use of a multi-hop mmW backhaul as a viable solution for emerging $5$G networks and they analyzed  the impact of the deployment density on the backhaul network capacity and power efficiency. Moreover, in \cite{narlikar}, the authors have proposed a multi-hop backhaul solution with a TDMA MAC protocol for WiMAX.}

The body of work in \cite{7110547} and \cite{7194083,7324401,7414178,7324420,7331861,7305743} solely focuses on physical layer metrics, such as links' capacity and coverage. In addition, it is focused only on single-hop or dual-hop backhaul networks, while new standards such as IEEE 802.11ay envision fully multi-hop networks. \textcolor{black}{The work presented in \cite{7422408} does not provide any algorithm to determine how SBSs must form a multi-hop mmW backhaul network. Moreover, the proposed model in \cite{7422408} is too generic and does not capture specific characteristics of a mmW network, such as susceptibility to blockage and directional transmissions. Last but not the least, no specific analysis or algorithm is provided for resource management in multi-hop mmW backhaul networks. The solution presented in \cite{narlikar} is not directly applicable to the mmW backhaul networks, as mmW is substantially different from WiMAX systems. In fact, authors in \cite{narlikar} focus primarily on the routing and link activation protocols in order to minimize the interference among active links. Such a conservative approach will yield an inefficient utilization of the mmW frequency resources, since interference scenario in WiMAX systems is completely different with directional mmW communications.}

Furthermore, the body of work in \cite{7110547}, \cite{7194083,7324401,7414178,7324420,7331861,7305743}, \cite{7422408}, and \cite{narlikar} does not account for the effect of backhaul cost in modeling backhaul networks. In fact, these existing works typically assume that all infrastructure belong to the same MNO which may not be practical for dense SCNs. In wireless networks, the backhaul cost constitutes a substantial portion of the total cost of ownership (TCO) for MNOs as indicated in  \cite{Hur13} and \cite{interdigital}. In fact, it is economically inefficient for an individual MNO to afford the entire TCO of an independent backhaul network as demonstrated in \cite{Hur13}, \cite{interdigital}, and \cite{interdigital14}. 
\textcolor{black}{The main advantages of inter-operator backhaul sharing is to reduce the number of required sites/radio access technology (RAT) interfaces per MNO to manage backhaul traffic, site rent, capital expenditures (CAPEX) by avoiding duplicate infrastructure, site operating expenditures (OPEX), and electricity costs \cite{taga}.}  
Moreover, inter-operator mmW backhaul architectures are more robust against the blockage and link quality degradation compared to the schemes in which operators act independently and non-cooperatively \cite{Rangan14}. This stems from the fact that cooperation increases flexibility to establish new backhaul links that can easily bypass obstacles. Therefore, MNOs will need to share their backhaul network resources with other MNOs that require backhaul support for their SBSs \cite{interdigital14}. Hence, beyond the technical challenges of backhaul management in SCNs, one must also account for the cost of sharing backhaul resources between MNOs.  

To address such economic challenges, a number of recent works have emerged in \cite{7248860} and \cite{interdigital14,Sengupta01,Mahloo01,7389369}. The work in \cite{interdigital14} motivates a business model for an SCN where multiple MNOs share the SBSs that are deployed on the street lights of dense urban areas. In \cite{Sengupta01} an economic framework is developed to lease the frequency resources to different MNOs by using novel pricing mechanisms. In \cite{Mahloo01}, the authors propose a cost evaluation model for small cell backhaul networks. This work highlights the fact that integrating heterogenous backhaul technologies is mandatory to achieve a satisfactory performance in a backhaul network. Moreover, they show that the TCO of an SCN is much higher than conventional cellular networks. Therefore, it is more critical to consider backhauling cost in small cell backhaul network design. The authors in \cite{7389369} propose a model where MNOs buy energy from the renewable power suppliers for their mmW backhaul network and solve the problem as a Stackelberg game between MNOs and power suppliers. In \cite{7248860}, we studied the problem of resource management for the mmW-microwave backhaul networks with multiple MNOs. The approach in \cite{7248860} considers both cost and the channel state information (CSI) to allocate backhaul resources to the SBSs. The provided solutions in \cite{Sengupta01,Mahloo01,7389369,interdigital14} focus solely on the economic aspects of the backhaul network, while a suitable backhaul network model must integrate the cost constraints with the physical constraints of the wireless network. In addition, \cite{7248860} does not consider multi-hop backhaul networks. Moreover, the backhaul model studied in \cite{7248860} is restricted to the case in which only two MNOs are in the network.


The main contribution of this paper is to propose a novel framework to model and analyze resource management and pricing for facilitating inter-operator sharing of multi-hop, mmW backhaul infrastructure in dense SCNs. In particular, the proposed framework is formulated using suitable techniques from  matching theory \cite{Roth92} so as to provide a distributed solution for managing the resources over multi-hop backhaul links. In the formulated model, the SBSs of one MNO can act as \emph{anchored BSs (A-BSs)} to provide backhaul support to other, \emph{demanding BSs (D-BSs)} that may belong to other MNOs. The proposed framework is composed of two highly-interrelated matching games: a network formation game and a resource management game. The goal of the network formation game is to associate the D-BSs to A-BSs for every hop of the backhaul links. This game is shown to exhibit peer effects thus mandating a new algorithmic approach that differs from classical matching works in \cite{Roth92} and \cite{eduard11}. To solve this game, we propose a distributed algorithm that is guaranteed to converge to a two-sided stable and Pareto optimal matching between the A-BSs and the D-BSs. Once the stable and optimal network formation solution is found, we propose a second matching game for resource management that allocates the sub-channels of each A-BS to its associated D-BSs, determined by the first matching game. The proposed approach considers the cost of the backhaul jointly with the links' achievable rates to allocate the sub-channels to the D-BSs. To solve this resource management matching game with peer effects, we propose a novel distributed algorithm that yields a two-sided stable and Pareto optimal matching between the sub-channels and the D-BSs. We compare the performance of the proposed \textit{cooperative mmW multi-hop backhaul network} (mmW-MBN) and compare the results with non-cooperative mmW-MBN. Simulation results show that MNOs cooperation provides significant gains in terms of network's average \textcolor{black}{backhaul sum rate}, reaching up to $30 \%$, compared to the non-cooperative mmW-MBN. The results also show that the cooperation among MNOs will significantly improve the statistics of the backhaul rate per SBS.

The rest of this paper is organized as follows. Section ~\ref{system model} describes the system model and formulates the problem. Section \ref{sec:III} presents our distributed approach to solve the network formation problem. Section \ref{Sec:IV} provides the proposed solution to solve the resource allocation problem. Section \ref{simulations} provides the simulation results and Section \ref{conclusion} concludes the paper.

\begin{figure}
	\centering
	\centerline{\includegraphics[width=7cm]{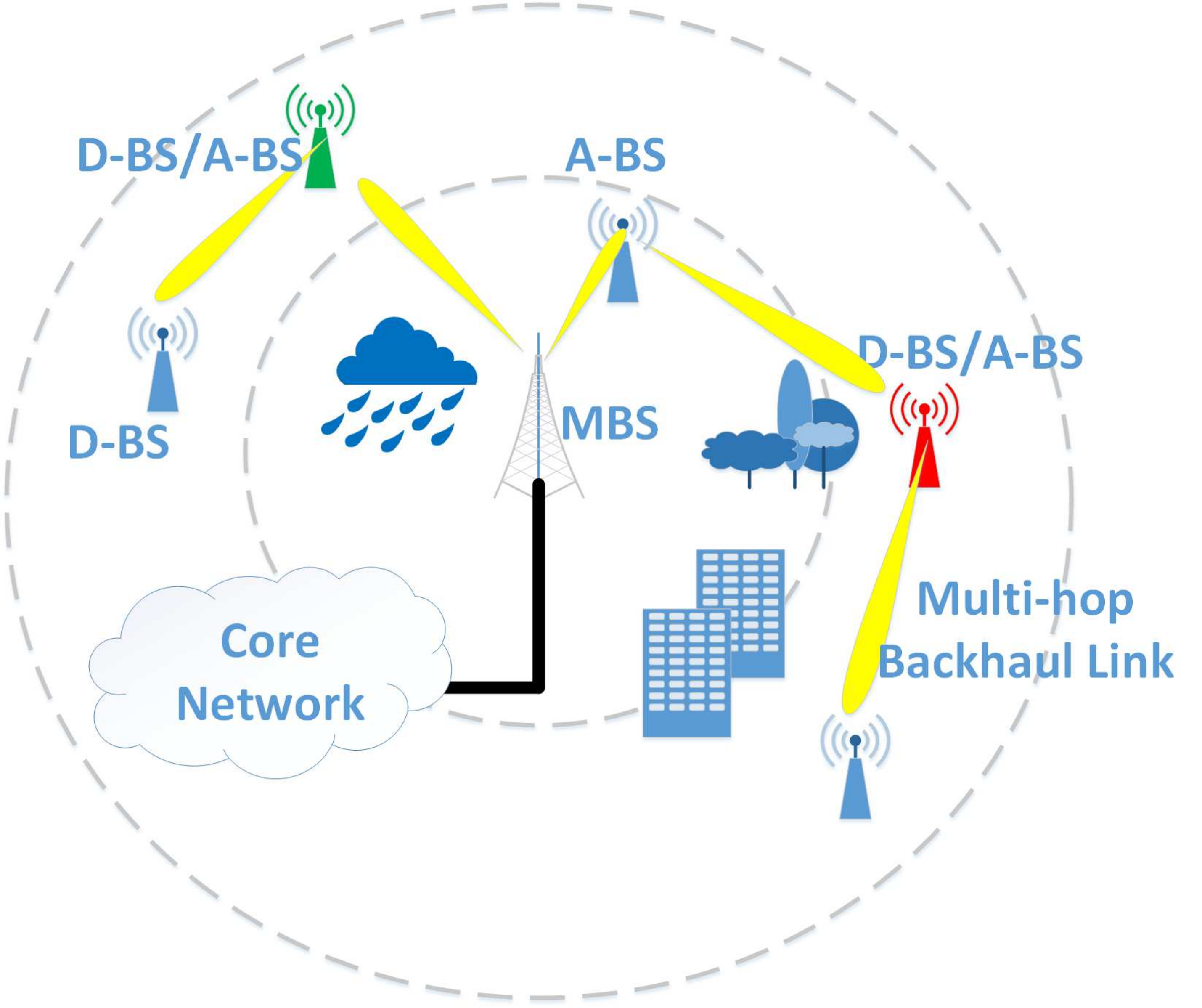}}
	\caption{\small An example of mmW-MBN with multiple MNOs. SBSs with the same color belong to the same MNO.}\vspace{0em}
	\label{model}
\end{figure}\vspace{0em}

\begin{table*}[!t]
	\footnotesize
	\centering
	\caption{Variables and notations}\vspace*{-.5em}
	\begin{tabular}{|c|c||c|c|}
		\hline
		\bf{Notation} & \bf{Description} & \bf{Notation} & \bf{Description} \\
		\hline
		$M$ & Number of SBSs & $\mathcal{M}$ & Set of SBSs\\
		\hline
		$N$ & Number of MNOs & $\mathcal{N}$ & Set of MNOs\\
		\hline
		$K$ & Number of sub-channels & $\mathcal{K}$ & Set of sub-channels\\
		\hline
		$e(m',m)$ & Backhaul link from $m'$ to $m$ & $\mathcal{E}$& Set of backhaul links\\
		\hline
		$\mathcal{M}_n$ & Set of SBSs belonging to MNO $n$ & $\mathcal{M}_m^d$ & Set of SBSs of distance $d$ from $m$\\
		\hline
		$\mathcal{M}_{m'}^{\textrm{D-BS}}$ & SBSs for whom SBS $m'$ serves as an A-BS & $w$ & Bandwidth of each sub-channel \\
		\hline
		$\zeta_e \in \{0,1\}$ & State of link $e$ & $\rho_e$ & Expected value of $\zeta_e$\\
		\hline
		$r_m(k,m';\zeta)$ & Rate for D-BS $m$ over sub-channel $k$ & $r_m(m',\boldsymbol{x})$ & Rate of D-BS $m$ from A-BS $m'$, given $\boldsymbol{x}$  \\
		\hline
		$\succ_m^D$ & Preference profile of D-BSs over A-BSs&$\succ_m^A$  & Preference profile of A-BSs over D-BSs\\
		\hline
		$P_m^D$ & Preference profile of D-BSs over sub-channels & $P_k^K$ & Preference profile of sub-channels over D-BSs\\
		\hline
		$\pi_j$ & Network formation matching for $j$-th hop & $\mu_j$ & Resource allocation matching for $j$-th hop\\
		\hline
		$Q_m$ & Quota of A-BS $m$ & $r_{m,\textrm{th}}$ & Backhaul minimum rate requirement for $m$\\
		\hline
		$\mathds{1}_{mm'}$ & Indicates if $m$ and $m'$ belong to same MNO& $\bar{r}_m(m')$ & Average rate for $m$ over all sub-channels\\
		\hline
	\end{tabular}\label{tab1}\vspace{-2em}
\end{table*}

\section{System Model}\label{system model}
Consider a mmW-MBN that is used to support the downlink transmissions of $M$ SBSs within the set $\mathcal{M}$. Each SBS belong to one of $N$ MNOs within the set $\mathcal{N}$. The set $\mathcal{M}$ can be decomposed into $N$ subsets $\mathcal{M}_n$, with $\bigcup_{n \in \mathcal{N}}\mathcal{M}_n=\mathcal{M}$ and $\bigcap_{n \in \mathcal{N}}\mathcal{M}_n=\emptyset$, where $\mathcal{M}_n$ represents the subset of SBSs belonging to MNO $n$. The SBSs are distributed uniformly in a planar area with radius $d_{\textrm{max}}$ around a macro base station (MBS), $m_0$, located at $(0,0) \in \mathbbm{R}^2$. The MBS is connected to the core network over a broadband fiber link, as shown in Fig. \ref{model}, and is shared by all MNOs. The SBSs can be connected to the MBS via a \emph{single-hop or a multi-hop mmW} link. The mmW-MBN can be represented as a directed graph $G(\mathcal{M},\mathcal{E})$, in which the SBSs are the vertices and $\mathcal{E}$ is the set of edges. Each edge, $e(m',m)\in \mathcal{E}$, represents a mmW backhaul link from SBS $m'$ to $m$. Hereinafter, for any link, the transmitting and the receiving SBSs (over the backhaul) will be referred to, respectively, as the A-BSs and the D-BSs. 

Thus, in our model, an SBS can be either a D-BS or an A-BS. Each A-BS $m$ will serve up to $Q_m$ D-BSs, while each D-BSs will be connected to one A-BS.

To show that an arbitrary D-BS $m$ is connected to an A-BS $m'$, we use the following binary variable
\begin{align}\label{mmwlink}
\epsilon_{e}(m',m) = \begin{cases}
1             &\textrm{if} \,\,\,e(m',m)\in \mathcal{E},\\
0             & \text{otherwise},
\end{cases}
\end{align}
where $\epsilon_{e}(m',m) =0$ implies that no backhaul link exists from SBS $m'$ to $m$. Finally, we denote by $\mathcal{M}^{\textrm{D-BS}}_{m'}$ the subset of SBSs for whom SBS $m'$ serves as an A-BS. In other words, $\mathcal{M}^{\textrm{D-BS}}_{m'}=\{m \in \mathcal{M}|\,\epsilon_{e}(m',m) =1\}$.
The backhaul links are carried out over a mmW frequency band, composed of $K$ sub-channels, within the set $\mathcal{K}$, each of a bandwidth $w$. A summary of our notation is provided in Table \ref{tab1}.

\subsection{Channel Model}
The state of a backhaul link is defined as a Bernoulli random variable $\zeta_{m'm}$ with success probability $\rho_{m'm}$ to determine if the link is LoS or NLoS. In fact, $\zeta_{m'm}=1$, if $e(m',m)$ is LoS, otherwise, $\zeta_{m'm}=0$. 
Based on the field measurements carried out in \cite{Ghosh14} and \cite{7070688,7510705,7501500}, the large-scale path loss of the link $e(m',m)$, denoted by $L_{\text{dB}}\left(m',m\right)$ in dB, is given by 
\begin{align}\label{pathloss}
L_{\text{dB}}(m',m)&=10\log_{10}(l(m',m))=20\log_{10}\left(\frac{4\pi d_0}{\lambda}\right)\notag\\
&+ 10\alpha \log_{10}\left(\frac{\|\boldsymbol{y}_m-\boldsymbol{y}_{m'}\|}{d_0}\right)+\chi, \,\,\, d\geq d_0,
\end{align}
where $\lambda$ is the wavelength at carrier frequency $f_c = 73$ GHz, $d_0$ is the reference distance, and $\alpha$ is the path loss exponent. Moreover, $\|\boldsymbol{y}_m-\boldsymbol{y}_{m'}\|$ is the Euclidean distance between SBSs $m$ and $m'$, located, respectively, at $\boldsymbol{y}_m \in \mathbbm{R}^2$ and $\boldsymbol{y}_{m'} \in \mathbbm{R}^2$. In addition, $\chi$ is a Gaussian random variable with zero mean and variance $\xi^2$. Path loss parameters $\alpha$ and $\xi$ will naturally have different values, depending on the state of the link. In fact, depending on whether the link is LoS or NLoS, these values can be chosen such that the path loss model in \eqref{pathloss} will provide the best linear fit with the field measurements carried out in \cite{Ghosh14}.  The benefit of the free space path loss model  used in \eqref{pathloss}, compared with other models such as the alpha-plus-beta  model, is that it is valid for all distances above the reference distance $d_0$ and the model parameters $\alpha$ and $\chi$ have concrete physical interpretations.

In addition, the field measurements in \cite{Rappaport1232,7481755,7534832} show that the mmW channel delay spread can be large, reaching up to more than $100$ ns, for the outdoor deployment of mmW SBSs in urban areas. To this end, for any link $e(m',m)$, a slow-varying frequency flat fading channel $h_{m'km}$ is considered over sub-channel $k$. Hence, conditioned to the link state $\zeta$, the achievable rate for a given link $e(m',m)$ over sub-channel $k$ will be given by
	\begin{align}\label{rate}
	&r_{m}(k,m';\zeta) = w \log_2\bigg(1+\\\notag
	&\frac{p_{m',k}\psi(m',m)l\left(m',m\right)|h_{m'km}|^2}{\sum_{m'' \neq m,m'}p_{m'',k}\psi(m'',m)l\left(m'',m\right)|h_{m''km}|^2 + \sigma^2}\bigg),
	\end{align}
	where $p_{m',k}$ and $\sigma^2$ denote, respectively, the transmit power of A-BS $m'$ over sub-channel $k$ and the noise power. To strike a balance between system performance and complexity, uniform power allocation is assumed. Here, we assume that total transmit power $p_{t,m'}$ is distributed uniformly over all sub-channels, such that $p_{m',k}=p_{t,m'}/K$ \cite{5374055,Kim2014,6692451,5733445}. The uniform power allocation assumption is also due to the fact that at a high SNR/SINR regime, as is expected in a mmW network with relatively short-range links and directional transmissions, it is well known that optimal power allocation policies such as the popular water-filling algorithm will ultimately converge to the uniform power allocation \cite{5733445}. Moreover, $\psi(m',m)$ represents the combined transmit and receive antenna gains. The antenna gain pattern for each BS is assumed to be sectorized and is given by \cite{7110547}:
	\begin{align}
	G(\theta)=\begin{cases}
	G_{\text{max}}, &\text{if} \,\,\,\,\,\theta <|\theta_m|,\\
	G_{\text{min}}, &\text{otherwise},
	\end{cases}
	\end{align}
	where $\theta$ and $\theta_m$ denote, respectively, the azimuth angle and the antennas' main lobe beamwidth. Moreover, $G_{\text{max}}$ and $G_{\text{min}}$ denote, respectively, the antenna gain of the main lobe and side lobes. It is assumed that for a desired link between A-BS $m'$ and D-BS $m$, $\psi(m',m)=G_{\text{max}}^2$. 
	Moreover, $\psi(m'',m)$ of an interference link from A-BS $m''$ to the target D-BS $m$ is assumed to be random. Using \eqref{rate}, we can write the achievable rate for the link $e(m',m)$ over the allocated sub-channels as follows:
\begin{align}\label{rate2}
r_{m}(m';\boldsymbol{x}) = \sum_{k\in \mathcal{K}}r_{m}(k,m';\zeta)x_{m'km},
\end{align}
where $\boldsymbol{x}$ is the resource allocation vector with elements $x_{m'km}=1$, if SBS $m'$ transmits to $m$ over sub-channel $k$, otherwise, $x_{m'km}=0$. In \eqref{rate2}, we remove the dependency on $\zeta$ in the left-hand side to simplify the notations. \textcolor{black}{Here, considering a decode-and-forward scheme, we note that, if an SBS $m$ is connected to the MBS via a multi-hop link of length $n$, then $r_m(m';\boldsymbol{x})$ will be limited by $1/n$ times the minimum (bottleneck) of all link rates over the multi-hop connection \cite{1638547}.} In addition, by averaging with respect to $\zeta$, the average achievable rate over all sub-channels for D-BS $m$ assigned to A-BS $m'$ will be
\begin{align}\label{rate_ave}
\bar{r}_m(m') &=\mathbbm{E}\left[\sum_{k \in \mathcal{K}}r_m(m',k;\zeta)\right],\\
&= \mathbbm{P}(\zeta_{m',m}=1)\sum_{k \in \mathcal{K}}r_m(m',k;\zeta)x_{m'km}\notag\\
&+\mathbbm{P}(\zeta_{m',m}=0)\sum_{k \in \mathcal{K}}r_m(m',k;\zeta)x_{m'km},\\
&= \rho_{m'm}\sum_{k \in \mathcal{K}}r_m(m',k;\zeta=1)x_{m'km}\notag\\
&+(1-\rho_{m'm})\sum_{k \in \mathcal{K}}r_m(m',k;\zeta=0)x_{m'km}.\label{rate_ave_7}
\end{align}

For dense urban areas, the number of obstacles blocking an arbitrary link $e(m',m)$ increases as $\|\boldsymbol{y}_m-\boldsymbol{y}_{m'}\|$ increases. Such severe shadowing will significantly reduce the received signal power, particularly, for street-level deployment of mmW SBSs over urban furniture such as lamp posts. Therefore, the communication range of each SBS will be limited to a certain distance $d$, where $d$ depends on the density of the obstacles, as suggested in \cite{7414178} and \cite{rappaport2014}. To this end, we define $\mathcal{M}_m^d$ as
\begin{align}\label{M^d}
\mathcal{M}_m^d = \{m' \in \mathcal{M}, m' \neq m\big|\,\, \|\boldsymbol{y}_m-\boldsymbol{y}_{m'}\| \leq d \},
\end{align}
which effectively represents the set of SBSs with which $m$ is able to communicate over an LoS or an NLoS link.

\subsection{Network formation and resource allocation in mmW-MBN with multiple MNOs} 
We consider a cooperative, inter-operator mmW-MBN in which, under proper pricing incentives, the SBSs of each MNO may act as A-BSs for other SBSs belonging to other MNOs. We let $q_m$ be a unit of price per sub-channel of SBS $m \in \mathcal{M}_n$, as determined by MNO $n$. That is, if $x_{mkm'}=1$ for $m \in \mathcal{M}_n$ and $m' \in \mathcal{M}_{n'}$, where $n \neq n'$, MNO $n'$ will have to pay $q_m$ to MNO $n$. To solve the resource management problem for the proposed mmW-MBN, we need to first determine the backhaul links, $\epsilon_e(m',m)$, and then specify the rate over each link, $r_m(m',\boldsymbol{x})$. To this end, as illustrated in Fig. \ref{organ}, we must solve two interrelated problems: 1) \emph{network formation} problem that determines $\mathcal{E}$,  and 2) \emph{resource allocation} problem to assign sub-channels of each A-BS $m$ to their corresponding D-BSs in $\mathcal{M}_m^{\textrm{D-BS}}$. 
 
The network formation problem can be formulated as follows: 
\begin{subequations}
\begin{IEEEeqnarray}{rCl}
\!\!\!\!\!\!\!\!\!\!\!\!\!\!\!\!\!\argmax_{\mathcal{E}} && \sum_{m \in \mathcal{M}}\sum_{m'\in \mathcal{M}_m^d}\epsilon_{e}(m',m)\bar{r}_m(m')-\kappa_m q^t_{m'}\mathds{1}_{mm'},\label{opt1:a}\\
\!\!\!\!\!\!\!\!\!\!\!\!\text{s.t.}\,\,\,\,
&&\epsilon_{e}(m',m)+\epsilon_{e}(m,m'')\notag\\
&&+\epsilon_{e}(m'',m')\leq 2, \forall m,m',m'' \in \mathcal{M},\label{opt1:b}\\
&&\!\!\!\!\!\!\sum_{\,\,\,\,\,\,\,\,\,m'\in \mathcal{M}_m^d}\!\!\!\epsilon_{e}(m',m)\leq 1, \forall m \in \mathcal{M},\label{opt1:c}\\
&&\!\!\!\!\!\!\sum_{\,\,\,\,\,\,\,\,\,m\in \mathcal{M}_{m'}^d}\!\!\!\epsilon_{e}(m',m)\leq Q_{m'},  \forall m' \in \mathcal{M},\label{opt1:d}\\
&&\epsilon_{e}(m',m)+\epsilon_{e}(m,m')\leq 1, \forall m,m' \in \mathcal{M},\label{opt1:e}\\
&&\epsilon_{e}(m',m) \in \{0,1\}, \forall m,m' \in \mathcal{M},\label{opt1:f}
\end{IEEEeqnarray}
\end{subequations}
where $\mathds{1}_{mm'}=1$, if both SBSs $m$ and $m'$ belong to different MNOs, otherwise, $\mathds{1}_{mm'}=0$. In addition, $\kappa_m$ is a weighting scalar that scales the cost of a link with respect to its rate. The total cost of a link $e(m',m)$ for $m$ is $q^t_{m'}=q_{m'}\sum_{k\in\mathcal{K}}x_{m'km}$. Constraint \eqref{opt1:b} is to avoid any cycles. In addition, \eqref{opt1:c} indicates that each D-BS must be assigned to at most one A-BS. Moreover, \eqref{opt1:d} indicates that each A-BS $m'$ can be assigned to up to $Q_{m'}$ D-BSs. Constraint \eqref{opt1:e}  ensures that all links are directional. That is, an SBS $m$ may transmit to $m'$ or receive its traffic from $m'$, however, cannot do both simultaneously. 

\begin{figure}
	\centering
	\centerline{\includegraphics[width=\columnwidth]{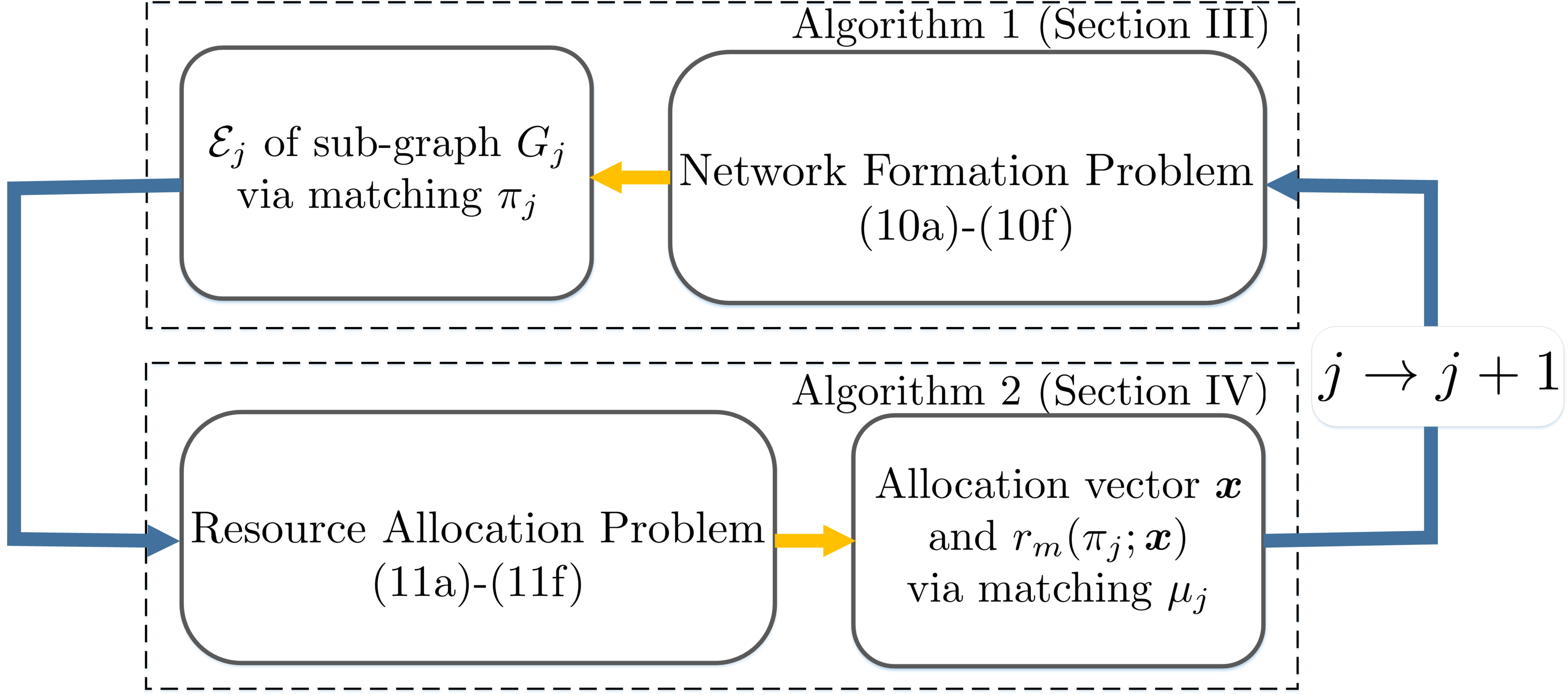}}
	\caption{\small Proposed multi-stage framework for joint backhaul network formation and resource allocation.}\vspace{0em}
	\label{organ}
\end{figure}\vspace{0em}
The solution of problem \eqref{opt1:a}-\eqref{opt1:f} yields $\mathcal{E}$ for the mmW-MBN graph $G(\mathcal{M},\mathcal{E})$ which also determines $\mathcal{M}_{m'}^{\textrm{D-BS}}$ for all $m' \in \mathcal{M}$. Next, the sub-channels of each A-BS $m'$ must be allocated to its assigned D-BSs in $\mathcal{M}_{m'}^{\textrm{D-BS}}$. Each MNO $n$ seeks to minimize the cost of its backhaul network, while maximizing the rate for each one of its SBSs $m \in \mathcal{M}_n$. To this end, the cooperative backhaul resource allocation problem can be formulated at each D-BS $m$ with $\epsilon_e(m',m)=1$ as follows:
\begin{subequations}
\begin{IEEEeqnarray}{rCll}
\!\!\!\!\!\!\!\!\!\!\!\!\argmax_{\boldsymbol{x}}&\quad&\sum_{k\in\mathcal{K}}\left[r_{m}(k,m';\zeta)-\kappa_{m}q_{m'}\mathds{1}_{m'm}\right]x_{m'km},\label{opt2:a}\\
\text{s.t.}\,\,\,\,\,\,
&&r_m(m';\boldsymbol{x})\leq \notag\\  &&\frac{1}{|\mathcal{M}_{m'}^{\textrm{D-BS}}|+1}r_{m'}(m'';\boldsymbol{x}),\hfill m' \in \mathcal{M}_{m''}^{\textrm{D-BS}}, \label{opt2:b}\\
&& r_m(m';\boldsymbol{x})\geq r_{m,\textrm{th}},\label{opt2:c}\\
&&\sum_{k \in \mathcal{K}}x_{m'km}\leq K,\label{opt2:d}\\
&&\!\!\!\!\!\!\!\!\sum_{\,\,\,\,\,\,\,\,\,\,\,m \in \pi(m')}\!\!\!\!x_{m'km}\leq 1,\label{opt2:e}\\
&&x_{m'km}\in \{0,1\}\label{opt2:f},
\end{IEEEeqnarray}
\end{subequations}
where $|.|$ denotes the set cardinality and $r_{m,\textrm{th}}$ denotes the minimum required rate for SBS $m$ which is typically determined by the traffic that is circulating over the downlink of the radio access network. \textcolor{black}{Constraint \eqref{opt2:b} ensures that the backhaul capacity of each A-BS $m'$ will be shared between its all assigned D-BSs in $\mathcal{M}_{m'}^{\text{D-BS}}$ as well as $m'$'s traffic. That is why $|\mathcal{M}_{m'}^{\text{D-BS}}|$ is increased by one in \eqref{opt2:b}. This scheme allows every A-BS receiving its traffic from the core network in addition to the traffic of the associated D-BSs.}

Prior to solving the proposed resource management problem, in \eqref{opt1:a}-\eqref{opt1:f} and \eqref{opt2:a}-\eqref{opt2:f}, we note that the solution of network formation problem will depend on the resource allocation and vice versa. That is because for any multi-hop backhaul connection, the rate of a backhaul link $e(m',m)$, $r_m(m',\boldsymbol{x})$, depends on the network formation $\mathcal{M}_{m'}^{\textrm{D-BS}}$, as shown in \eqref{opt2:b}. Moreover, to associate a D-BS $m$ to an A-BS in $\mathcal{M}_m^d$, the backhaul rates for A-BSs must be considered. Following, we propose a novel approach that allows to jointly solve these two problems\footnote{\textcolor{black}{We note that the Dijkstra's algorithm cannot be applied to the network formation and resource allocation problems at hand, since the convergence of this algorithm is contingent upon assuming a constant weight for each link, which must be independent of the weights at other links \cite{Dijkstra}. However, due to the intererence term in \eqref{rate}, this assumption will not be valid.}}. 

\section{Matching Theory for Multi-Hop Backhaul Network Formation}\label{sec:III}
The problems in \eqref{opt1:a}-\eqref{opt1:f} and \eqref{opt2:a}-\eqref{opt2:f} are $0$-$1$ integer programming which do not admit closed-form solutions and have exponential complexity \cite{4036195}. To solve these problems, we propose a novel approach based on \emph{matching theory}, a suitable mathematical framework that allows the derivation of a decentralized solution with tractable complexity for combinatorial allocation problems as shown in \cite{Roth92}, \cite{eduard11}, and \cite{Shalash2015}.

In particular, a matching game is essentially a two-sided assignment problem between two disjoint sets of players in which the players of one set must be matched to the players of the other set, according to some \textit{preference profiles}. A preference profile $\succ$ is defined as a reflexive, complete, and transitive binary relation between the elements of a given set. We denote by $\succ_m$ the preference profile of player $m$. Consequently, $a\succ_m b$ means player $m$ prefers $a$ more than~$b$.

To jointly solve the network formation and resource allocation problems, we propose a multi-stage framework, as shown in Fig. \ref{organ}, using which
the mmW-MBN can be formed as follows:
\begin{align}\label{graph}
	G_1(\mathcal{A}_1 \cup \mathcal{D}_1,\mathcal{E}_1)&\rightarrow G_2(\mathcal{A}_2 \cup \mathcal{D}_2,\mathcal{E}_2)\notag\\
	&\rightarrow \cdots \rightarrow G_J(\mathcal{A}_J \cup \mathcal{D}_J,\mathcal{E}_J),
\end{align}
where the arrows in \eqref{graph} indicate the transformation from sub-graph $G_j$ to $G_{j+1}$, where $\mathcal{A}_{j+1}=\mathcal{D}_j$ and $\mathcal{D}_{j+1} =\left\{m \in \bigcup_{m' \in \mathcal{A}_{j+1}}\mathcal{M}_{m'}^d\big|m \notin \bigcup_{j'=1}^{j+1} A_{j'}\right\}$. Each sub-graph $G_j(\mathcal{A}_j \cup \mathcal{D}_j,\mathcal{E}_j)$ is defined as a directed graph from the set of A-BSs $\mathcal{A}_j$ to the set of D-BSs $\mathcal{D}_j$ via directed links in $\mathcal{E}_j$. Initially, $\mathcal{A}_1=\{m_0\}$, and $\mathcal{D}_1=\mathcal{M}_{m_0}^d$. Each stage $j$ corresponds to the formation and resource management of $j$-th hop of the backhaul links. In fact, at each stage $j$, we address the following two problems: 1) in Subsections \ref{Sec:III-A}-\ref{Sec:III-B}, we find $\mathcal{E}_j$ of sub-graph $G_j$ that solves problem \eqref{opt1:a}-\eqref{opt1:f}, given the rate of each backhaul link from the previous stages, and 2) in Section \ref{Sec:IV}, we solve \eqref{opt2:a}-\eqref{opt2:f} for sub-graph $G_j$ to allocate the sub-channels of each A-BS in $\mathcal{A}_j$ to its associated D-BSs. The variable $J$, resulting from the proposed solution, will yield the maximum number of hops for the multi-hop backhaul link from the MBS to SBSs. The final graph $G(\mathcal{M},\mathcal{E}^*)$ is the overlay of all sub-graphs in \eqref{graph}, such that $\mathcal{E}^*=\bigcup_{j=1}^J \mathcal{E}_j$.  

\subsection{Multi-hop backhaul network formation problem as a matching game}\label{Sec:III-A}

At each stage $j$, the backhaul network formation problem can be cast as a one-to-many matching game \cite{7105641} which is defined next.\vspace{0em} 
\begin{definition}\label{def:1}
	Given two disjoint sets $\mathcal{A}_j$ and $\mathcal{D}_j$, the network formation policy $\pi_j$ can be defined as a \emph{one-to-many matching relation}, $\pi_j: \mathcal{A}_j \cup \mathcal{D}_j \rightarrow \mathcal{A}_j \cup \mathcal{D}_j$, such that
	\begin{itemize}
		\item[1)] $\forall m \in \mathcal{D}_j$, if $\pi_j(m)\neq m$, then $\pi_j(m) \in \mathcal{A}_j$,
		\item[2)] $\forall m' \in \mathcal{A}_j$, if $\pi_j(m')\neq m'$, then $\pi_j(m') \subseteq \mathcal{D}_j$,
		\item[3)] $\pi_j(m)=m'$, if and only if $m \in \pi_j(m')$,
		\item[4)] $\forall m' \in \mathcal{A}_j$, $|\pi_j(m')|\leq Q_{m'}$,
	\end{itemize}
	where $\pi_j(m)=m$ indicates that SBS $m$ is unmatched.
\end{definition}The quota of A-BS $m'$, $Q_{m'}$, represents the maximum number of D-BSs that can be assigned to $m'$. The relationship of the matching $\pi_j$ with the link formation $\mathcal{E}_j$ is such that $m \in \pi_j(m')$ is equivalent to $\epsilon_e(m',m)=1$. In addition, the matching policy $\pi_j$ by definition satisfies the constraints in \eqref{opt1:c}-\eqref{opt1:f}. 

To complete the definition of the matching game, we must introduce suitable utility functions that will subsequently be used to define the preference profiles of all players. In the proposed mmW-MBN, in addition to the achievable rate, the cost of cooperation among MNOs must be considered in the preference relations of the SBSs. Here, we define the utility of D-BS $m \in \mathcal{D}_j$ that seeks to evaluate a potential connection to an A-BS $m' \in \mathcal{A}_j$, $U_m(m')$, as
\begin{align}\label{utility1}
	U_m(m') = \min\left(\bar{r}_m(m'), r_{m'}(\pi_{j-1}(m'),\boldsymbol{x})\right)-\kappa_{m}q_{m'}\mathds{1}_{mm'},
\end{align}
where $\bar{r}_m(m')$ is given by \eqref{rate_ave_7}. Here, we note that $r_{m'}(\pi_{j-1}(m'),\boldsymbol{x})$ is determined at stage $j-1$. If $j=1$, then SBS $m$ is directly connected to the MBS. The first term in \eqref{utility1} captures the fact that achievable rate for D-BS $m$ is bounded by the backhaul rate of A-BS $m'$. The second term indicates that D-BS $m\in \mathcal{M}_n$ considers the cost of the backhaul link, if the A-BS does not belong to MNO $n$. However, if the A-BS belongs to MNO $n$, the cost will naturally be zero.

Furthermore, the utility of an A-BS $m' \in \mathcal{A}_j$ that evaluates the possibility of serving a D-BS $m \in \mathcal{D}_j$, $V_{m'}(m)$ will be:
\begin{align}\label{utility2}
V_{m'}(m) = \bar{r}_{m'}(m)+\kappa_{m'}q_{m'}\mathds{1}_{mm'}.
\end{align}

In fact, \eqref{utility2} implies that A-BS $m'$ aims to maximize the backhaul rate, while considering the revenue of providing backhaul support, if $\mathds{1}_{mm'} \neq 0$. Based on the utilities in \eqref{utility1} and \eqref{utility2}, the preference profiles of D-BSs and A-BSs will be given by:
\begin{align}
m'_1 \succ_{m}^{D} m'_2 &\iff U_{m}(m'_1) > U_{m}(m'_2),\label{pref1}\\
m_1 \succ_{m'}^{A} m_2 &\iff V_{m'}(m_1) > V_{m'}(m_2),\label{pref2}
\end{align}
where $\succ^{D}$ and $\succ^{A}$ denote, respectively, the preference relations for D-BSs and A-BSs. \textcolor{black}{Here, we assume that if $U_{m}(m')\leq 0$, A-BS $m' \in \mathcal{M}_{n'}$ will not be acceptable to D-BS $m \in \mathcal{M}_{n}$. This allows MNO $n$ to choose the control parameter $\kappa_m$ in \eqref{utility1}, to prevent the formation of any link between a given D-BS $m$ and any A-BS $m'$ that is charging a high price for using its sub-channels.} Given this formulation, we next propose an algorithmic solution for the proposed matching game that will allow finding suitable network formation policies.
\subsection{Proposed mmW-MBN formation algorithm}\label{Sec:III-B}
To solve the formulated game and find the suitable network formation policy $\pi_j$ for stage $j$, we consider two important concepts: \emph{two-sided stability} and \emph{Pareto optimality}. A two-sided stable matching is essentially a solution concept that can be used to characterize the outcome of a matching game. In particular, two-sided stability is defined as follows \cite{Roth92}:
\vspace{0em}
\begin{definition}\label{def:2}
	A pair of D-BS $m\in \mathcal{D}_j$ and A-BS $m' \in \mathcal{A}_j$ in network formation policy $\pi_j$, $(m',m)\in \pi_j$, is a \emph{blocking pair}, if and only if $m' \succ_{m}^{D} \pi_j(m)$ and $m \succ_{m'}^{A} m''$ for some $m'' \in \pi_j(m')$. A matching policy $\pi_j$ is said to be \emph{two-sided stable}, if there is no blocking pair. 
\end{definition}

The notion of two-sided stability ensures fairness for the SBSs. That is, if a D-BS $m$ prefers the assignment of another D-BS $m''$, then $m''$ must be preferred by the A-BS $\pi_j(m'')$ to $m$, otherwise, $\pi_j$ will not be two-sided stable. While two-sided stability characterizes the stability and fairness of a matching problem, the notion of Pareto optimality, defined next, can characterize the efficiency of the solution.
\vspace{0em}
\begin{definition}\label{def:3}
	A matching policy $\pi_j$ is said to be \emph{Pareto optimal} (PO), if there is no other matching $\pi'_j$ such that $\pi'_j$ is equally preferred to $\pi_j$ by all D-BSs, $\pi'_j(m) \succeq_m^D \pi_j(m)$, $\forall m \in \mathcal{D}_j$, and strictly preferred over $\pi_j$, $\pi'_j(m) \succ_m^D \pi_j(m)$ for some D-BSs.
\end{definition}
\begin{algorithm}[t!]
	\footnotesize
	\caption{Millimeter-Wave Mesh Backhaul Network Formation Algorithm}\label{euclid}
	\textbf{Inputs:}\,\,$\mathcal{A}_j$, $\mathcal{D}_j$, $\succ_{m'}^{A}$, $\succ_{m}^{D}$.\\
	\textbf{Output:}\,\,$\pi_j$.
	\begin{algorithmic}[1]
		\State \parbox[t]{.9\linewidth}{\textit{Initialize:} Temporary set of the rejected D-BSs $\mathcal{D}^r=\mathcal{D}_j$. Tentative set $\mathcal{A}_{m'}^a=\emptyset$ of accepted D-BSs by A-BS $m'$, $\forall m' \in \mathcal{A}_j$. Let $\mathcal{S}_m = \mathcal{A}_j \cap \mathcal{M}_m^d$, $\forall m \in \mathcal{D}_j$.}
		\While{$\mathcal{D}^r \neq \emptyset$}
		\State \parbox[t]{.9\linewidth}{For each D-BS $m \in \mathcal{D}^r$, find the most preferred A-BS, $m'^* \in \mathcal{S}_m$, based on $\succ_{m}^{D}$. Each D-BS $m$ sends a link request signal to its corresponding $m'^*$.} 
		\State Add $m$ to $\mathcal{A}_{m'^*}^a$ and remove $m'^*$ from $\mathcal{S}_m$. If $\mathcal{S}_m = \emptyset$, remove $m$ from $\mathcal{D}^r$.
		\State \parbox[t]{.9\linewidth}{Each A-BS $m' \in \mathcal{A}_j$ receives the proposals, tentatively accepts $Q_{m'}$ of the most preferred applicants from $\mathcal{A}_{m'}^a$, based on $\succ_{m'}^{A}$ and reject the rest.} 
		\State \parbox[t]{.9\linewidth}{Remove rejected D-BSs from $\mathcal{A}_{m'}^a$ for every A-BS $m'$ and add them to $\mathcal{D}^r$. Remove accepted D-BSs from $\mathcal{D}^r$.}
		\EndWhile
	\end{algorithmic}\label{algo:1}
\end{algorithm}

To find the stable policy $\pi_j$, the \emph{deferred acceptance} (DA) algorithm, originally introduced in \cite{Gale}, can be adopted. Hence, we introduce Algorithm \ref{algo:1} based on the DA algorithm which proceeds as follows. Initially, no D-BS in $\mathcal{D}_j$ is assigned to an A-BS in $\mathcal{A}_j$. The algorithm starts by D-BSs sending a link request signal to their most preferred A-BS, based on their preference relation $\succ_{m}^{D}$. Next, each A-BS $m'$ receives the request signals and approves up to $Q_{m'}$ of the most preferred D-BSs, based on $\succ_{m'}^{A}$ and rejects the rest of the applicants. The algorithm follows by rejected D-BSs applying for their next most preferred A-BS. Algorithm \ref{algo:1} converges once each D-BS $m$ is assigned to an A-BS or is rejected by all A-BS in $\mathcal{A}_j \cap \mathcal{M}_m^d$. Since it is based on a variant of the DA process, Algorithm \ref{algo:1} is guaranteed to converge to a stable matching as shown in \cite{Gale}. Moreover, among the set of all stable solutions, Algorithm \ref{algo:1} yields the solution that is PO for the D-BSs. \textcolor{black}{Here, we note that the role of an SBS will change dynamically according to the changes of the CSI. However, due to the slow-varying channels, the CSI will remain relatively static within the channel coherence time (CCT), and consequently, the role of SBSs can be considered fixed within one CCT. The proposed distributed solution in Algorithm \ref{algo:1} allows the SBSs to update their preference profiles, which depend on the CSI, and accordingly their role, after each CCT period.}

Given $\pi_j$ resulted from Algorithm \ref{algo:1}, the sub-channels of each A-BS $m'\in \mathcal{A}_j$ must be allocated to the D-BSs in $\pi_j(m')$. To this end, we next propose a distributed solution to solve the backhaul resource allocation problem.
\section{Matching Theory for Distributed  Backhaul Resource Management}\label{Sec:IV}
To solve the problem in \eqref{opt2:a}-\eqref{opt2:f}, centralized approaches will require MNOs to share the information from their SBSs with a trusted control center. Therefore, centralized approaches will not be practical to perform inter-operator resource management. To this end, we formulate the problem in \eqref{opt2:a}-\eqref{opt2:f} in each stage $j$ as a second matching game and propose a novel distributed algorithm to solve the problem. The resulting resource allocation over stage $j$ will 
determine the rate for each backhaul link in $j$-th hop. As discussed in the previous section,
this information will be used to find the network formation policy $\pi_{j+1}$ of the next stage $j+1$.
\vspace{-1em}
\subsection{Resource management as a matching game}
To allocate the sub-channels of an A-BS $m' \in \mathcal{A}_j$ to its associated D-BSs in $\pi_j(m')$, we consider a one-to-many matching game $\mu_j$ composed of two disjoint sets of mmW sub-channels, $\mathcal{K}$, and the D-BSs in $\pi_j(m')$ associated to A-BS $m'$. The matching $\mu_j$ can be formally defined, similar to the network formation matching $\pi$ in Definition \ref{def:1}.
However, unlike in the network formation matching game, here, we do not introduce any quota for the D-BSs. There are two key reasons for not considering quotas in our problem which can be explained as follows. First, for a resource allocation problem with minimum rate requirement, as presented in \eqref{opt2:b} and \eqref{opt2:c}, a quota cannot be determined a priori for an SBS. That is because the number of sub-channels required by a given SBS is a function of the CSI over all sub-channels between an A-BS $m'$ and all D-BSs assigned to $m'$. Therefore, sub-channel allocation for one D-BS will affect the number of required sub-channels by other D-BSs. This is a significant difference from classical solutions based on matching theory such as in \cite{Shalash2015,eduard11,7105641}, and \cite{7268835}. The second practical reason for not using a fixed quota in our resource allocation problem is that there is no clear approach to determine the suitable quota values as a function of the various system metrics, such as CSI. On the other hand, with no constraint on the maximum number of sub-channels to be allocated to a D-BS, a distributed matching algorithm may assign all the sub-channels to a few of D-BSs, resulting in an inefficient allocation. Therefore, considering the significant impact of quota on the resource allocation, it is more practical to limit the number of allocated sub-channels by the natural constraint of the system, as presented in constraint \eqref{opt2:b}.

Therefore, we let a D-BS $m$ assign a utility $\Psi_m(k;\mu_j)$ to a sub-channel $k$, where $(m,k)\notin \mu_j$, only if
\begin{align}\label{criterion}
\sum_{k' \in \mu_j(m)}r_m(k',\pi_j(m))<\! \frac{1}{|\mathcal{M}_{\pi_{j}(m)}^{\textrm{D-BS}}|+1}r_{\pi_{j}(m)}(m'';\boldsymbol{x}),
\end{align}
where $m''$ is the A-BS that serves $\pi_j(m)$, i.e., $m''=\pi_{j-1}(\pi_j(m))$. 
Otherwise, $\Psi_{m}(k;\mu_j) = -\infty$, meaning that sub-channel $k$ is not acceptable to D-BS $m$, given the current matching $\mu_j$. In fact, \eqref{criterion} follows the rate constraint in \eqref{opt2:b} and prevents the D-BS $m$ from being allocated to unnecessary sub-channels. 
With this in mind, we define the utilities and preferences of sub-channels and D-BSs, considering the rate constraints, CSI, and the cost of each sub-channel. For any D-BS $m$, the utility that $m$ achieves when being matched to a sub-channel $k$ will be given~by:
\begin{align}\label{utility:3}
\Psi_m(k;\mu_j)\! = \!\!\begin{cases}
r_m(k,\pi_j(m)), &\textrm{if} \,\, \eqref{criterion}\,\,\textrm{is held}  ,\\ 
-\infty, &\textrm{otherwise.}
\end{cases}
\end{align}
Here, note that \eqref{utility:3} does not include the price of the sub-channels. That is because the sub-channel price $q_{\pi_j(m)}$ is equal for all sub-channels and will not affect the preference of the D-BS.

The utility of sub-channels is controlled by their corresponding A-BS. The utility that is achieved by sub-channel $k$ when being matched to a D-BS $m$ will be:
\begin{align}\label{utility:4}
\Phi_k(m) = r_m(k,\pi_j(m)) + \kappa_{\pi_j(m)}q_{\pi_j(m)}\mathds{1}_{m\pi_j(m)}.
\end{align}

In \eqref{utility:4}, the second term represents the revenue obtained by A-BS $\pi_j(m)$ for providing backhaul support to D-BS $m$ over sub-channel $k$. The scaling factor $\kappa_{\pi_j(m)}$ enables the A-BS to balance between the achievable rate and the revenue. In fact, as $\kappa_{\pi_j(m)}$ increases, a given A-BS will tend to assign more resources to D-BSs of other MNOs. Similar to \eqref{pref1} and \eqref{pref2}, the preference profiles of sub-channels and D-BSs are given by 
\begin{align}
k_1 P_{m}^{D} k_2 &\iff \Psi_{m}(k_1) > \Psi_{m}(k_2),\label{pref3}\\
m_1 P_{k}^{K} m_2 &\iff \Phi_{k}(m_1) > \Phi_{k}(m_2),\label{pref4}
\end{align}
where $P_{m}^{D}$ and $P_{k}^{K}$ denote, respectively, the preference profiles of D-BS $m$ and sub-channel $k$.\vspace{0em}
\subsection{Proposed resource allocation algorithm for mmW-MBNs}
Here, our goal is to find a two-sided stable and efficient PO matching $\mu_j$ between the sub-channels of A-BS $m'$ and the D-BSs in $\pi_j(m')$, for every A-BS $m' \in \mathcal{A}_j$. From \eqref{utility:3}, we observe that the utility of a D-BS and, consequently, its preference ordering depend on the matching of the other D-BSs. This type of game is known as a \textit{matching game with peer effect} \cite{Bodine11}. 
This is in contrast
with the traditional matching games in which players have strict and non-varying preference profiles. For the proposed matching game, we can make the following observation.
\vspace{0cm}
\begin{proposition}
Under the mmW-MBN specific utility functions in \eqref{utility:3} and \eqref{utility:4}, the conventional DA algorithm is not guaranteed to yield a stable solution.
\end{proposition}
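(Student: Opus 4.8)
The plan is to establish this negative result by exhibiting an explicit counterexample, since the proposition asserts only that stability \emph{fails in general}. First I would recall precisely why the DA algorithm of \cite{Gale} guarantees a stable outcome in the classical setting: the Gale--Shapley argument hinges on each player holding a \emph{fixed}, strict preference profile throughout the execution, so that once a sub-channel tentatively rejects a proposer it can never later become worse off, and the terminal matching admits no blocking pair with respect to those unchanging preferences. The heart of the argument is then to observe that this hypothesis is violated here: by \eqref{utility:3}, the utility $\Psi_m(k;\mu_j)$---and hence the preference profile $P_m^D$ in \eqref{pref3}---depends explicitly on the running matching $\mu_j$ through the rate-cap condition \eqref{criterion}. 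A sub-channel that is acceptable to a D-BS $m$ at one iteration becomes unacceptable (utility $-\infty$) the moment $m$ accumulates enough rate to satisfy \eqref{opt2:b}, and can revert to being acceptable if $m$ subsequently loses a sub-channel to a competitor. These matching-dependent, non-static preferences are exactly the peer effects identified via \cite{Bodine11}.

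Second, I would construct a minimal instance that makes the failure transparent: a single A-BS $m'$ with, say, two associated D-BSs and a small set of sub-channels, with concrete rate values $r_m(k,m')$ and a fair-share cap in \eqref{criterion} chosen so that the satiation threshold is crossed partway through the DA iterations. I would then trace the DA process step by step, showing that the tentative acceptances and rejections are driven by preferences that are valid only at the instant they are used, and that the algorithm terminates at a specific matching $\mu_j$. Finally, evaluating the preference profiles \eqref{pref3}--\eqref{pref4} \emph{at} this terminal $\mu_j$, I would exhibit a D-BS $m$ and a sub-channel $k$ with $(m,k)\notin\mu_j$ such that $k$ is strictly preferred by $m$ over some element currently in $\mu_j(m)$ while $m$ is strictly preferred by $k$ (via the A-BS-controlled utility \eqref{utility:4}) over its current holder---i.e., a blocking pair in the sense of the stability notion of Definition \ref{def:2}, adapted to sub-channels and D-BSs. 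This directly contradicts stability and proves the claim.

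The main obstacle, and the point requiring the most care, is calibrating the numerical rate matrix so that three things hold simultaneously: the DA trajectory is unambiguous and lands at a single well-defined matching; the induced terminal preferences genuinely admit a blocking pair rather than merely an artifact of the transient preferences; and every domain constraint is respected, in particular the single-assignment rule \eqref{opt2:e} and the $-\infty$ acceptability convention of \eqref{utility:3}. The delicate part is that the blocking pair must be certified against the \emph{final} (post-hoc) preference profile, so I must ensure that the sub-channel vacated or reshuffled during the run is one that, under the terminal cap status of $m$, reopens as strictly desirable to $m$---turning the transient satiation of \eqref{criterion} into a permanent instability at termination. A clean choice of two D-BSs whose rate caps are crossed in opposite order along the DA run should suffice to force this reversal.
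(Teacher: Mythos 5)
Your approach is essentially the same as the paper's: a concrete counterexample in which the matching-dependent acceptability rule \eqref{criterion} causes a D-BS to reject a sub-channel it statically prefers, leaving a blocking pair at termination. The only difference is that the paper actually writes the instance down, and it is far simpler than the calibration you anticipate --- two D-BSs, three sub-channels, and every individual rate $r_{m_i}(k_j)$ above $r_{\textrm{th}}$, so each D-BS is satiated after its first tentative acceptance; the later-arriving but more-preferred sub-channel $k_2$ is then rejected by $m_2$ and ends unmatched, making $(m_2,k_2)$ a blocking pair with no delicate rate tuning required.
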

\begin{proof}\vspace{0em}
We prove this using an example. Let  $\mathcal{D}_j=\{m_1,m_2\}$, with	preference profiles $k_1 P_{m_1}^D k_2 P_{m_1}^D k_3$ and $k_2 P_{m_2}^D k_3 P_{m_2}^D k_1$. Moreover, let $\mathcal{K}=\{k_1,k_2,k_3\}$ with preference profiles $m_1 P_{k_i}^K m_2$, for $i=1,2$, and $m_2 P_{k_3}^K m_1$. Considering achievable rates $r_{m_1}(k_1)$, $r_{m_2}(k_2)$ and $r_{m_2}(k_3)$ are greater than $r_{\textrm{th}}$, DA algorithm yields a matching $\mu$ where $\mu(k_1)=\{m_1\}$, $\mu(k_2)=\emptyset$, and $\mu(k_3)=\{m_2\}$. However, $(m_2,k_2)$ form a blocking pair, which means $\mu$ is not stable. 
\end{proof}

Thus, we cannot directly apply the DA algorithm to our problem and we need to adopt a novel algorithm that handles blocking pairs and achieves a stable solution. To this end, we proposed a distributed resource allocation scheme in Algorithm \ref{algo:2}. The algorithm proceeds as follows. For every A-BS $m' \in \mathcal{A}_j$, the initial set of rejected sub-channels is $\mathcal{K}^r=\mathcal{K}$. The algorithm initiates by each sub-channel $k \in \mathcal{K}^r$ sending a request signal to its most preferred D-BS, based on \eqref{pref4}. The D-BSs receive the requests and accept a subset of most preferred sub-channels, based on  \eqref{pref3}, that satisfy their minimum rate requirement and reject the rest. The rejected sub-channels are added to $\mathcal{K}^r$. Accepted sub-channels and the sub-channels that are rejected by all D-BSs in $\pi_j(m')$ are removed from $\mathcal{K}^r$. Moreover, in step $9$, D-BSs update their preferences $P_m^D$. The rejected sub-channels apply for their next most preferred D-BS from their preference profile. Algorithm \ref{algo:2} proceeds until $\mathcal{K}^r$ is an empty set. Next, any unmatched sub-channel $k \in \mathcal{K}$ is assigned to the most preferred D-BS $m$ with $\mathds{1}_{mm'}=0$. The algorithm converges once all sub-channels are matched. Throughout this algorithm, we note that the corresponding A-BS sends the matching requests to D-BSs on the behalf of its sub-channels. 
\begin{algorithm}[!t]
	\footnotesize
	\caption{Backhaul Resource Allocation Algorithm}\label{euclid}
	\textbf{Inputs:}\,\,$\mathcal{A}_j$, $\pi_j$, $r_{\text{th}}$.\\
	\textbf{Output:}\,\,$\mu_j$.
	\begin{algorithmic}[1]
		\For{$i=1$, $i\leq |\mathcal{A}_j|$,$i++$}
		\State \parbox[t]{.9\linewidth}{\textit{Initialize:} Set A-BS $m'$ to $i$-th element of $\mathcal{A}_j$. Temporary set of the rejected sub-channels $\mathcal{K}^r=\mathcal{K}$. Tentative sets $\mathcal{D}_{j}^m=\emptyset$ and $r_m(m')=0$ for each D-BS $m \in \pi_j(m')$. For each sub-channel $k$, let $\mathcal{C}_k=\pi_j(m')$.}
		\While{$\mathcal{K}^r \neq \emptyset$}
		\State \parbox[t]{.9\linewidth}{For each sub-channel $k \in \mathcal{K}^r$, find the most preferred D-BS, $m^* \in \pi_j(m')$, based on $P_{k}^{K}$. A-BS sends a link request signal to the corresponding $m^*$ for each sub-channel. Add $k$ to $\mathcal{D}_{j}^{m^*}$ and remove $m^*$ from $\mathcal{C}_k$.}
		\State If $\mathcal{C}_k = \emptyset$, remove $k$ from $\mathcal{K}^r$.
		\State \parbox[t]{.9\linewidth}{Each D-BS $m \in \pi_j(m')$ receives the proposals and tentatively accepts the most preferred sub-channel from $\mathcal{D}_{j}^m$, based on $P_{m}^{D}$ and adds the corresponding rate of the accepted sub-channels to $r_m(m')$. }
		\State  \parbox[t]{.9\linewidth}{If \eqref{criterion} is not met, add the next most preferred sub-channel and update $r_m(m')$, otherwise, reject the rest of sub-channels.}
		\State \parbox[t]{.9\linewidth}{Remove rejected sub-channels from $\mathcal{D}_{j}^m$ for every D-BS $m$ and add them to $\mathcal{K}^r$. Remove accepted sub-channels from $\mathcal{K}^r$.}
		\State Update $\mu_j$ and $P_{m}^D$ for every D-BS.
		\EndWhile
		\State Based on current allocation, update $\Psi_m(k;\mu_j)$ for D-BSs with $\mathds{1}_{mm'}=0$.
		\State Let $\mathcal{K}'^r = \{k \in \mathcal{K}|\mu_j(k)=k\}$.
		\While{$\mathcal{K}'^r \neq \emptyset$}
		\State \parbox[t]{.9\linewidth}{Remove an arbitrary sub-channel $k$ from $\mathcal{K}'^r$ and allocate it to its most preferred D-BS $m$ from $\pi_j(m')$, with $\mathds{1}_{mm'}=0$, if \eqref{criterion} is held.}
		\State Update $\mu_j$, $P_m^D$, and $r_m(m')$. 
		\EndWhile
		\EndFor
	\end{algorithmic}\label{algo:2}
\end{algorithm}
The proposed Algorithm \ref{algo:2} exhibits the following properties. 
\begin{theorem}
	Algorithm \ref{algo:2} is guaranteed to converge to a two-sided stable matching $\mu_j$ between sub-channels and D-BSs. Moreover, the resulting solution, among all possible stable matchings, is Pareto optimal for sub-channels.
\end{theorem}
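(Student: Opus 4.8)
The plan is to prove the three assertions---termination, two-sided stability, and Pareto optimality for the sub-channels---separately, treating each A-BS $m'\in\mathcal{A}_j$ in isolation, since the outer \textbf{for} loop of Algorithm~\ref{algo:2} decouples the per-A-BS subproblems. Throughout I will use that the sub-channels are the \emph{proposing} side (they send requests down their fixed lists $P_k^K$ from \eqref{utility:4}), while the D-BSs accept greedily under the fixed rate order $P_m^D$ subject to the capacity bound \eqref{criterion}, whose right-hand side is a constant fixed at stage $j-1$.

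\emph{Termination.} First I would show both \textbf{while} loops terminate. In the first loop, whenever a sub-channel $k$ proposes to a D-BS it deletes that D-BS from its candidate set $\mathcal{C}_k$, and it never proposes to the same D-BS twice; since each $\mathcal{C}_k\subseteq\pi_j(m')$ is finite and strictly shrinks, the total number of proposals is at most $|\mathcal{K}|\,|\pi_j(m')|$, so $\mathcal{K}^r$ empties in finitely many steps. The second loop removes one sub-channel from $\mathcal{K}'^r$ per iteration and runs at most $|\mathcal{K}|$ times. As $\mathcal{A}_j$ is finite, the algorithm converges.

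\emph{Stability.} This is the crux, and the difficulty is exactly the peer effect: by \eqref{criterion} and \eqref{utility:3} a D-BS's willingness to accept $k$ depends on the rates it already holds, so the classical substitutability-based argument fails---as the preceding Proposition demonstrates. Instead I would argue by contradiction using a monotonicity invariant. The key point is that each D-BS $m$ always retains, among the sub-channels currently proposing to it, the most preferred ones under the \emph{fixed} order $P_m^D$ that still fit its constant capacity in \eqref{criterion}; hence the accumulated rate $r_m(m')$ is non-decreasing across iterations and $m$ never drops a held sub-channel in favor of a less preferred one. Now suppose the output $\mu_j$ admits a blocking pair $(m,k)$ (in the analogue of Definition~\ref{def:2}): then $m\,P_k^K\,\mu_j(k)$, so $k$ proposed to $m$ before settling on $\mu_j(k)$ and was rejected; by the invariant, at termination $m$ holds sub-channels at least as preferred as those causing that rejection and already saturating \eqref{criterion}, so $m$ can neither accommodate nor prefer $k$---a contradiction. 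It remains to rule out blocking pairs created in the second loop: a sub-channel placed there was left unmatched after the first loop, hence proposed to and was rejected by \emph{every} D-BS in $\pi_j(m')$; therefore any D-BS $m$ with $m\,P_k^K\,\mu_j(k)$ had already exhausted its capacity with preferred sub-channels and cannot block with such a $k$. Thus $\mu_j$ is two-sided stable.

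\emph{Pareto optimality for sub-channels.} Finally I would establish that $\mu_j$ is the \emph{sub-channel-optimal} stable matching, from which Pareto optimality for the sub-channels follows immediately. The argument is an induction over proposals showing that no sub-channel is ever rejected by a D-BS to which it is matched in \emph{some} stable matching: if $k$ is rejected by $m$, then $m$ holds a set $T$ of strictly preferred sub-channels saturating \eqref{criterion}, and by the inductive hypothesis each $t\in T$ can do no better than $m$ in any stable matching; hence in any stable matching $\nu$ with $k\in\nu(m)$ some $t\in T$ is displaced, and then $(m,t)$ would block $\nu$, a contradiction. Consequently each sub-channel attains in $\mu_j$ the most preferred D-BS it could obtain in any stable matching, so no stable matching can make some sub-channel strictly better off without harming another; i.e.\ $\mu_j$ is Pareto optimal for the sub-channels.

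\emph{Main obstacle.} I expect the bulk of the effort to lie in the stability step: formalizing the monotonicity invariant for the held sets and the accumulated rate $r_m(m')$ under the preference update of line~9, and carefully checking that the final same-MNO assignment loop cannot introduce a blocking pair despite the cost term $\kappa_{\pi_j(m)}q_{\pi_j(m)}\mathds{1}_{m\pi_j(m)}$ in \eqref{utility:4} biasing sub-channel preferences toward cross-MNO D-BSs.
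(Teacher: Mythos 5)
Your overall architecture matches the paper's: termination because no sub-channel proposes to the same D-BS twice in either loop, stability via the deferred-acceptance rejection invariant, and Pareto optimality as a corollary of stability. Two points of comparison are worth noting. First, for Pareto optimality the paper takes a shorter, direct route: if $m \,P_k^K\, \mu_j(k)$ then stability forces $k' \,P_m^D\, k$ for every $k' \in \mu_j(m)$, so any reassignment of $k$ to $m$ displaces some $k'$ and creates the blocking pair $(m,k')$; you instead prove the stronger statement that $\mu_j$ is the \emph{sub-channel-optimal} stable matching via the classical ``no proposer is ever rejected by an achievable partner'' induction. Your route buys a stronger conclusion, but it leans on lattice-type structure that is delicate under the peer effects of \eqref{criterion} (in particular, the step ``some $t\in T$ is displaced'' needs the rate-threshold notion of feasibility made precise); the paper's argument sidesteps this entirely.

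Second, your treatment of the second \textbf{while} loop is slightly imprecise. You argue that any $m$ with $m\,P_k^K\,\mu_j(k)$ ``had already exhausted its capacity'' during the first loop and therefore cannot block. That settles cross-MNO D-BSs (for which $\Psi_m(k;\mu_j)=-\infty$ once \eqref{criterion} is saturated), but step 11 of Algorithm \ref{algo:2} \emph{updates} $\Psi_m$ for same-MNO D-BSs precisely so that they can absorb leftover sub-channels, so a first-loop rejection does not by itself rule out a same-MNO blocking partner. The paper closes this case differently: step 14 assigns each leftover $k$ to its \emph{most preferred} D-BS among those with $\mathds{1}_{mm'}=0$, so $\mu_j(k)\,P_k^K\, m$ for every other same-MNO candidate $m$, and no such pair can block. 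You should invoke that maximality rather than the first-loop rejection here. Finally, your side remark that the accumulated rate $r_m(m')$ is non-decreasing is not actually needed and can fail (a newly arrived high-rate sub-channel can let $m$ meet \eqref{criterion} while shedding several lower-rate ones), so it is best dropped; the load-bearing invariant is only that a D-BS which has rejected $k$ forever after holds only sub-channels it prefers to $k$, which you do state and which is exactly what the paper uses.
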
 
\begin{proof}\vspace{0em}
	Algorithm \ref{algo:2} will always converge, since no sub-channel will apply for the same D-BS more than once, through steps $3$-$12$ or $13$-$16$. Next, we show that the proposed algorithm always converges to a two-sided stable matching. To this end, let D-BS $m$ and sub-channel $k$ form a blocking pair $(m,k)$. That is, $m P_k^K \mu_j(k)$ and $k P_m^D k'$, where $k' \in \mu_j(m)$. We show that such a blocking pair does not exist. To this end, we note that there are two possible cases for sub-channel $k$: 1) $k \in \mathcal{K}'^r$, and 2) $k \notin \mathcal{K}'^r$, in step $12$. 
	
	If $k \in \mathcal{K}'^r$ in step $12$, that means $k$ is unmatched and the minimum rate requirement is satisfied for all D-BSs, including $m$. Thus, $(m,k)$ is a blocking pair only if $\mathds{1}_{mm'}=0$, otherwise, $m$ will not accept more sub-channels.  However, in step $14$, $k$ will be assigned to its most preferred D-BS $\mu_j(k)$ with $\mathds{1}_{\mu_j(k)m'}=0$, meaning that $\mu_j(k) P_k^K m$. Hence, $(m,k)$ cannot be a blocking pair.
	
	Next, if $k \notin \mathcal{K}'^r$, then $k$ is matched to a D-BS $\mu_j(k)$ prior to step $12$. Here, $m P_k^K \mu_j(k)$ implies that $k$ has applied for $m$ before $\mu_j(k)$ and is rejected. Thus, $k' P_m^D k$, for all $k' \in \mu_j(m)$. Therefore, $(m,k)$ cannot be a blocking pair and matching $\mu_j$ is two-sided stable. 

	
	To prove Pareto optimality, we show that no sub-channel $k$ can improve its utility by being assigned to another D-BS $m$, instead of $\mu_j(k)$. If $m P_k^K \mu_j(k)$, it means  $k' P_m^D k$, for all $k' \in \mu_j(m)$, due to the two-sided stability of $\mu_j$. Hence, a new matching $\mu'_j$ that allocates $k$ to $m$ instead of a sub-channel $k' \in \mu_j(m)$ will make $(m,k')$ to be a blocking pair for $\mu'_j$. Therefore, no other stable matching exists that improves the utility of a sub-channel.
\end{proof} 
Here, we note that the proposed solution is Pareto optimal within each subgraph, corresponding to each stage, and it is assumed that network formation in subsequent subgraphs will not affect the utility functions in \eqref{utility1} and \eqref{utility2} for the SBSs in previous subgraphs. This assumption is valid, since a given D-BS will experience random interference from the interfering A-BSs. Given that the number of interfering A-BSs is large, which is true for backhaul networks that are supporting many SBSs, the average interference power in \eqref{utility1} and \eqref{utility2} will not depend on the network formation in subsequent subgraphs. Hence, the preference profiles of the D-BSs and A-BSs, and the consequent matching within each subgraph will be independent of the other subgraphs. Therefore, given that matching within each subgraph is Pareto optimal and is not affected by other subgraphs, the overall network formation is Pareto optimal in terms of maximizing the sum-rate.
\vspace{-0cm}

\subsection{Complexity Analysis of the Proposed Multi-stage Solution}
First, we analyze the network formation complexity of an arbitrary stage $j$ from Algorithm \ref{algo:1}. For the purpose of complexity analysis, we consider the maximum number of requesting signals that D-BSs in $\mathcal{D}_j$ will send to the A-BSs in $\mathcal{A}_j$ before Algorithm  \ref{algo:1} converges. In the worst-case scenario, i.e., the scenario with the highest conflict among D-BSs, all D-BSs in $\mathcal{D}_j$ have the same preference ordering. Let 
\begin{align}\label{complex1}
m'_{1}	\succ_{m} m'_{2} \succ_{m} \cdots \succ_{m} m'_{|\mathcal{A}_j|-1} \succ_{m} m'_{|\mathcal{A}_j|},  
\end{align}
be the preference ordering of all D-BSs $m \in \mathcal{D}_j$, where $\mathcal{A}_j=\{m'_{1},m'_{2},\cdots,m'_{|\mathcal{A}_j|}\}$. Hence, only $Q_{m'_i}$ D-BSs will be accepted by the A-BS $m'_i$ during the $i$-th iteration of Algorithm \ref{algo:1}. Moreover, the number of iterations $I$ is an integer that satisfies
\begin{align}\label{complex2}
\sum_{i=1}^{I-1}Q_{m'_i} < |\mathcal{D}_j| \leq \sum_{i=1}^{I}Q_{m'_i}.
\end{align}
Therefore, the total number of requests sent by D-BSs will be
\begin{align}\label{complex3}
&|\mathcal{D}_j|+ \left(|\mathcal{D}_j|-Q_{m'_1}\right) + \left(|\mathcal{D}_j|-Q_{m'_1}-Q_{m'_2}\right) + \cdots, \notag\\
&
+\left(|\mathcal{D}_j|-Q_{m'_1}-\cdots-Q_{m'_{I-1}}\right) =I|\mathcal{D}_j| -\sum_{i=1}^{I-1}(I-i)Q_{m'_i},\notag\\
&=I|\mathcal{D}_j| -I\sum_{i=1}^{I}Q_{m'_i}+\sum_{i=1}^{I}iQ_{m'_i}\leq \sum_{i=1}^{I}iQ_{m'_i},
\end{align}
where \eqref{complex2} is used to derive the inequality in \eqref{complex3}. For the special case in which $Q_{m'_i}=Q, \forall m'_i \in \mathcal{A}_j$, \eqref{complex2} implies that $(I-1)Q<|\mathcal{D}_j|$. Hence, \eqref{complex3} can be simplified to
\begin{align}\label{complex4}
\sum_{i=1}^{I}iQ_{m'_i}=\frac{1}{2}Q(I)(I+1)< \frac{1}{2}Q\left(\frac{|\mathcal{D}_j|}{Q}+1\right)\left(\frac{|\mathcal{D}_j|}{Q}+2\right).
\end{align}
Therefore, the complexity of stage $j$ in Algorithm \ref{algo:1} is $\mathcal{O}(|\mathcal{D}_j|^2)$, which admits a second-order polynomial relation with respect to the number of D-BSs.\\
Similarly, for Algorithm \ref{algo:2}, the worst case scenario is when all sub-channels of A-BS $m'$ have the same preference ordering for D-BSs in $\pi_j(m')$ and only one sub-channel is accepted over each iteration. Therefore, the number of requesting signals sent from A-BS $m'$ to its associated D-BSs in $\pi_j(m')$ will be at most
\begin{align}\label{complex5}
\!\!\!\!\!\!K&+(K-1)+\cdots+\left(K-|\pi_j(m')|+1\right),\notag\\=&|\pi_j(m')|K-\frac{1}{2}\left(|\pi_j(m')|\right)\left(|\pi_j(m')|+1\right)< KQ_{m'},
\end{align}
where the inequality in \eqref{complex5} results from having $0\leq |\pi_j(m')|\leq Q_{m'}$. Therefore, the total number of requesting signals is $\sum_{m' \in \mathcal{A}_j}KQ_{m'}$. For $Q_{m'_i}=Q, \forall m'_i \in \mathcal{A}_j$, the complexity of Algorithm \ref{algo:2} in stage $j$ is $\mathcal{O}(KQ|\mathcal{A}_j|)$. Thus, the overall complexity of an arbitrary stage $j$ of the proposed distributed solution is $\mathcal{O}(|\mathcal{D}_j|^2+|\mathcal{A}_j|)$. This result implies that the complexity of the proposed distributed solution, composed of Algorithms \ref{algo:1} and \ref{algo:2}, is bounded by a second-order polynomial with respect to the network size. This result shows that the proposed approach yields a solution with a manageable complexity for the two interrelated integer programming problems in \eqref{opt1:a}-\eqref{opt1:f} and \eqref{opt2:a}-\eqref{opt2:f}. 
\begin{table}[!t]
	\scriptsize
	\centering
	\caption{
		\vspace*{-0em}Simulation parameters}\vspace*{-0em}
	\textcolor{black}{
		\begin{tabular}{|c|c|c|}
			\hline
			\bf{Notation} & \bf{Parameter} & \bf{Value} \\
			\hline
			$f_c$ & Carrier frequency & $73$ GHz\\
			\hline
			$p_{t,m_0}$, $p_{t,m}$ &  Transmit power for MBS \& SBSs & $40$ and $30$ dBm\\
			\hline
			$M$ & Total number of SBSs & $3$ to $65$\\
			\hline
			$N$ &  Number of MNOs & $3$ to $5$\\
			\hline
			$\Omega$ & Available Bandwidth & $5$ GHz\\
			\hline
			$K$ & Number of sub-channel& $50$\\
			\hline
			($\xi_{\text{LoS}}$,$\xi_{\text{NLoS}}$) & Standard deviation of path loss& ($4.2, 7.9$) \cite{Ghosh14} \\
			\hline
			($\alpha_{\text{LoS}}$,$\alpha_{\text{NLoS}}$) & Path loss exponent& ($2,3.5$) \cite{Ghosh14}\\
			\hline
			$d_0$ & Path loss reference distance& $1$ m \cite{Ghosh14}\\
			\hline
			$G_{\text{max}}$ & Antenna main lobe gain& $18$ dB\cite{7110547} \\
			\hline
			$G_{\text{min}}$ & Antenna side lobe gain& $-2$ dB\cite{7110547} \\
			\hline
			$\theta_{m}$ & beam width& $10^{\circ}$\cite{7110547} \\
			\hline
			$\sigma^2$ & Noise power& $-174$ dBm/Hz \!\!+\!\! $10\log_{10}\!\!\frac{\Omega}{K}$ \\
			\hline
			$d_{\text{max}}$ & Radius of simulation area& $400$ m  \\
			\hline
			$d$ & SBSs communication range& $200$ m \cite{7414178,rappaport2014}\\
			\hline
			$r_{\text{th}}$ & Required rate per SBS& $1$ Mbps \\
			\hline
			$q$ & Unit of price per sub-channel& $\$ 1$  \\
			\hline
		\end{tabular}\label{tab:sim}\vspace{-0cm}
	}
\end{table}
\vspace{0em}
\section{Simulation Results}\label{simulations}
For our simulations, we consider a mmW-MBN with an MBS located at $(0,0)\in \mathbbm{R}^2$ and up to $M = 65$ SBSs distributed uniformly and randomly within a planar area with radius $d_{\textrm{max}}=400$ m. The simulation parameters are summarized in Table \ref{tab:sim}. Moreover, the number of SBSs is considered to be equal for all MNOs. We compare our proposed approach with the following three other approaches:
\begin{enumerate}
	\item[1)]  \textit{Optimal solution} obtained via an exhaustive search which finds the resource allocation that maximizes the \textcolor{black}{backhaul sum rate. In fact, this benchmark explores all the possibilities for sub-channel allocation with uniform transmission power.}
	\item[2)]  \textit{Non-cooperative scheme} \textcolor{black}{which follows the proposed algorithms for both network formation and resource allocation, however, cooperation among MNOs is not allowed. That is, the SBSs of an MNO do not provide backhaul support to the SBSs of other MNOs.}
	\item[3)]  \textit{Random allocation} that assigns D-BSs randomly to an A-BS within their communication range, subject to the constraints in \eqref{opt1:b}-\eqref{opt1:f}. In addition, each A-BS randomly allocates sub-channels to its assigned D-BSs, subject to the constraints in \eqref{opt2:b}-\eqref{opt2:f}.
\end{enumerate}
   All statistical results are averaged over a large number of independent runs.
\subsection{Achievable \textcolor{black}{backhaul} sum rate of the mmW-MBN}
\begin{figure}[t!]
	\centering
	\begin{subfigure}[b]{\columnwidth}
		\includegraphics[width=\columnwidth]{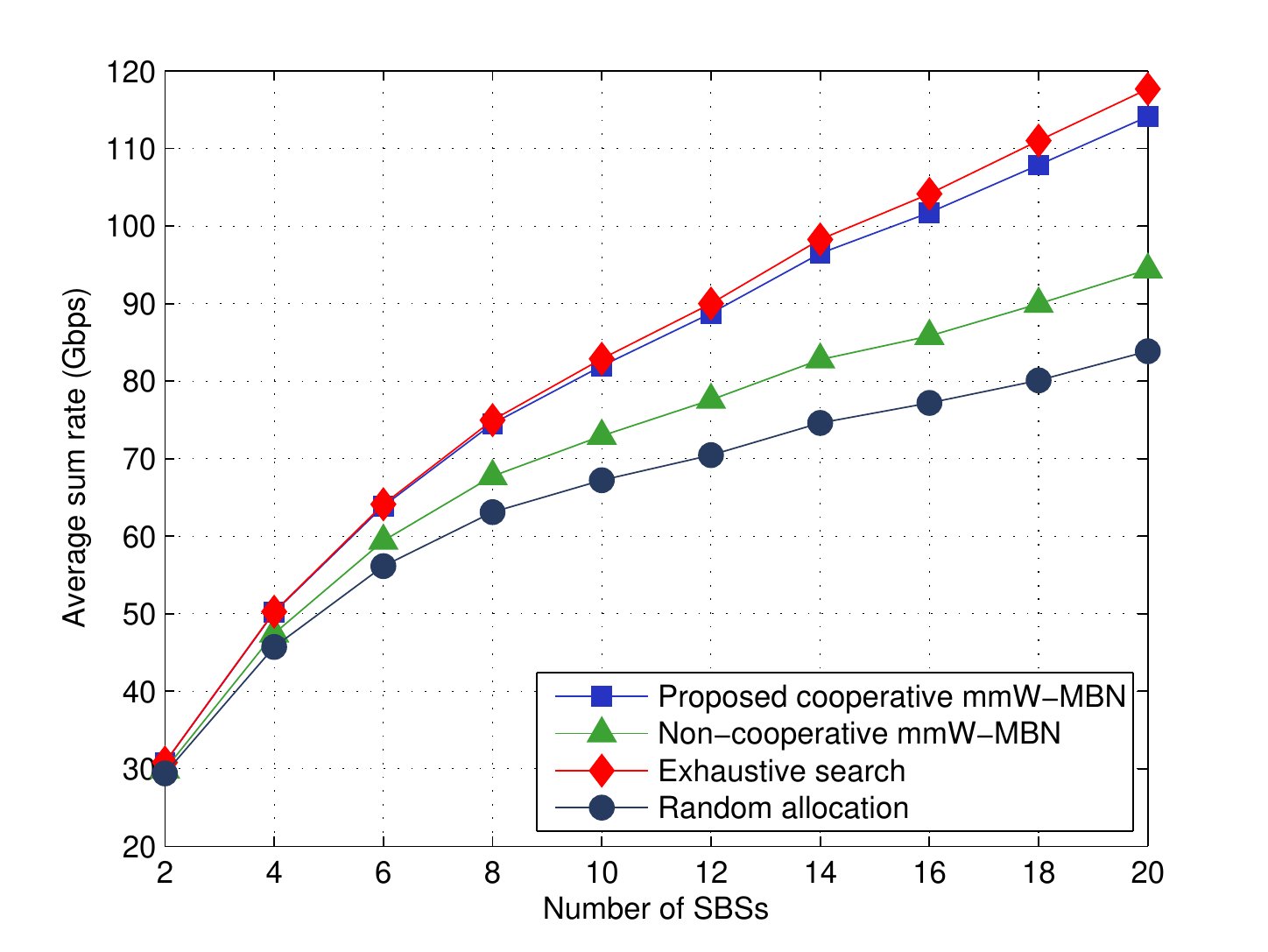}\vspace{-1em}
		\caption{}
		\label{ffig3}\vspace{-0em}
	\end{subfigure}
	\begin{subfigure}[b]{\columnwidth}
		\includegraphics[width=\columnwidth]{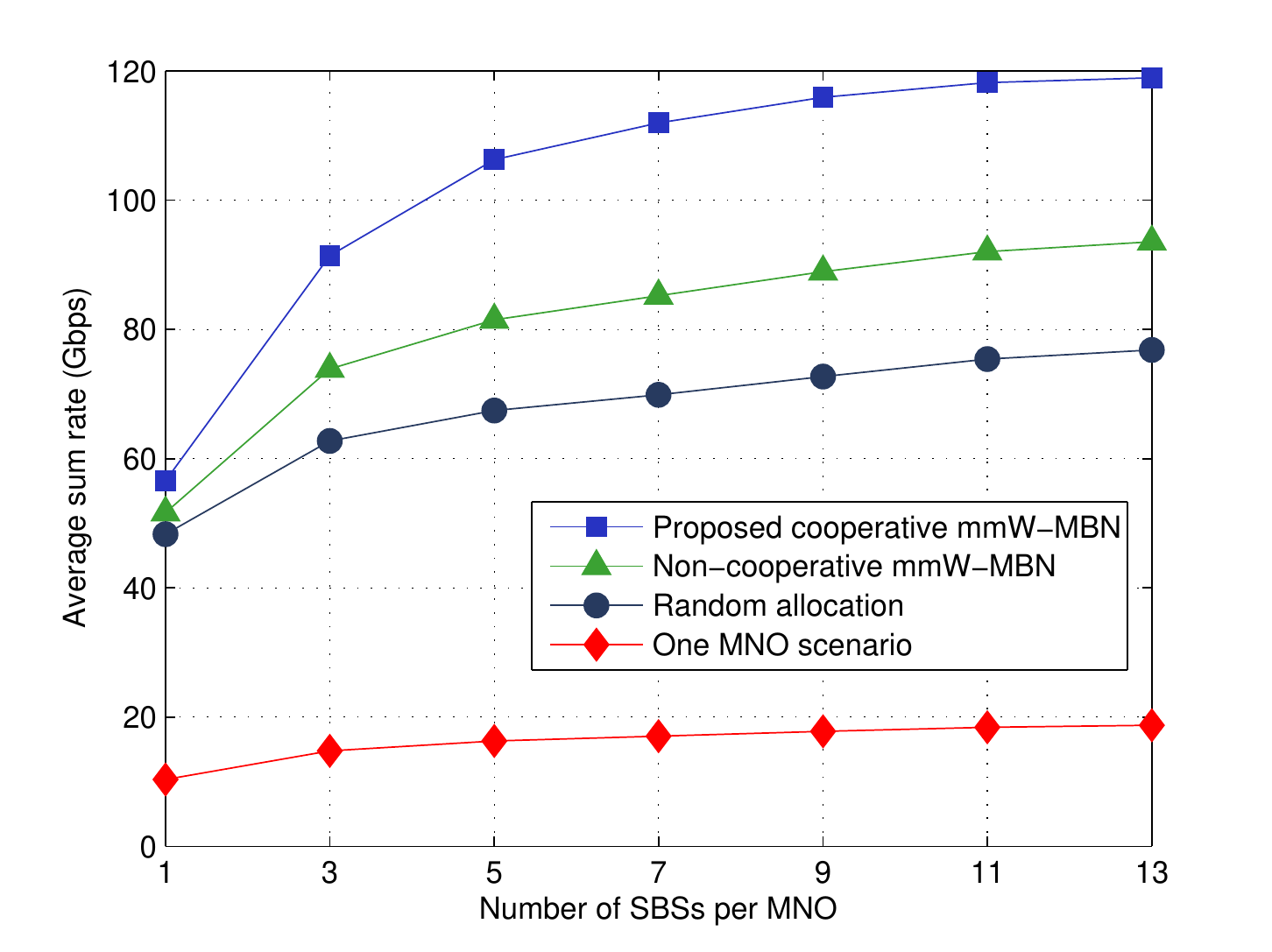}\vspace{-0.1cm}
		\caption{}
		\label{ffig4}
	\end{subfigure}
	\vspace{-2em} 
	\caption{\small Average sum rate resulting from the proposed cooperative mm-MBN approach, non-cooperative scheme, random allocation, and the optimal solution as the number of SBSs varies.}\label{figs3-4}
\end{figure}
Fig. \ref{ffig3} shows a performance comparison between the proposed framework with the optimal solution, non-cooperative, and random allocation approaches, for a mmW-MBN with $K=7$ sub-channels, up to $M=20$ SBSs, and $N=2$ MNOs. Due to the computational complexity of the exhaustive search, for this comparison figure, a relatively small network size is considered. In Fig. \ref{ffig3}, the optimal solution and the random allocation provide, respectively, an upper and lower bound on the achievable sum-rate of the given network. Fig. \ref{ffig3} shows that the proposed cooperative framework based on matching theory yields a promising performance comparable with results from the optimal solution. In fact, the performance gap will not exceed $3.2 \%$ with the network size up to $M=20$ SBSs. In addition, the results in Fig. \ref{ffig3} show that the proposed solution improves the sum-rate up to $21\%$ and $36\%$ compared to, respectively, the non-cooperative and the random allocation scheme.

\textcolor{black}{
In Fig. \ref{ffig4}, the average sum rate resulting from the proposed cooperative approach is compared with both the non-cooperative and random allocation schemes, for a dense mmW-MBN with $N=5$ MNOs and up to $M=65$ SBSs. From Fig. \ref{ffig4}, we can see that the average sum rate increases as the number of SBSs increases. This is due to the fact that more SBSs will be able to connect to the MBS via a multi-hop backhaul link. Fig. \ref{ffig4} shows that, the proposed approach outperforms both the non-cooperative and random allocation schemes for all network sizes. In fact, the proposed framework increases the average sum rate by $27 \%$ and $54 \%$, respectively, compared to the non-cooperative and random allocation schemes, for $M=65$ SBSs. \textcolor{black}{From this figure, we can clearly see that the average sum rate and, hence, the spectral efficiency of the network is significantly improved in multi-MNO scenarios compared with the network scenario with only one MNO. That is because directional transmissions over the millimeter wave frequencies allow different MNOs to efficiently reuse the available bandwidth which, in turn, results in higher spectral efficiency. } }
\begin{figure}[!t]
	\centering
	\centerline{\includegraphics[width=\columnwidth]{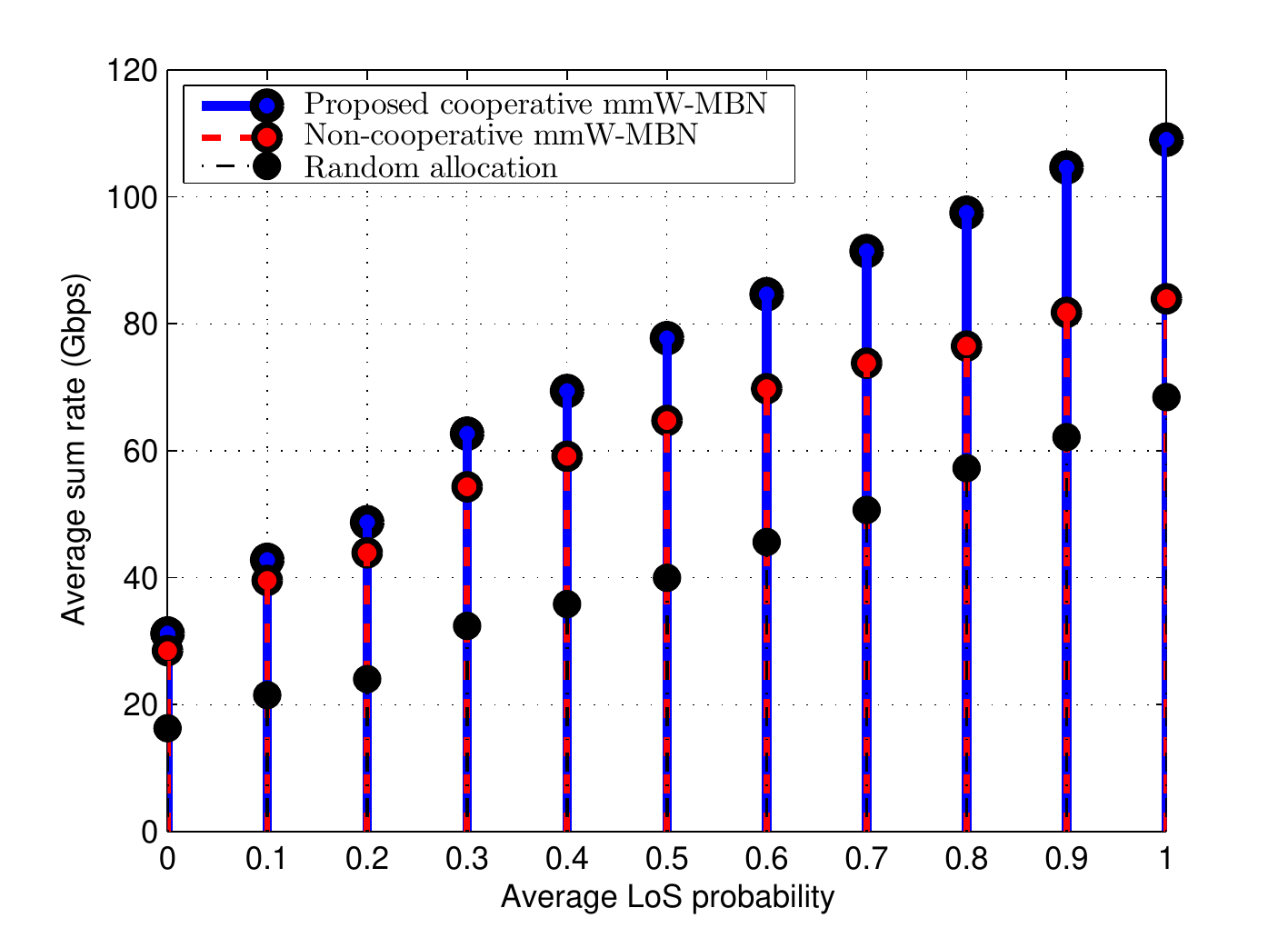}}\vspace{-.2cm}
		\caption{\small Average sum rate versus the average LoS probability $\rho$.}\vspace{0em}
	\label{ffig5}
\end{figure}\vspace{0em}

\textcolor{black}{
In Fig. \ref{ffig5}, we compare the average sum rate for the proposed approach with the non-cooperative and random allocation schemes, versus the average LoS probability. The primary goal here is to analyze the severe impact of the blockage on the network performance. In particular, Fig. \ref{ffig5} shows that blockage degrades the sum rate up to six times, when $\rho$ decreases from $1$ to $0$. However, the results in Fig. \ref{ffig5} show that the proposed approach is more robust against blockage, compared to the non-cooperative and random allocation schemes. In fact, the proposed approach yields up to $25 \%$ and $42 \%$ performance gains for $\rho=1$, respectively, compared to the non-cooperative and random allocation schemes. We note that for extreme blockage scenarios, e.g., $\rho = 0, 0.2$, it is expected that the gains will be small, since the achievable rate for most of the links is degraded by blockage. However, we can observe that as more LoS backhaul links become available, the performance gap increases. The main reason for this trend is that the backhaul rate for each A-BS increases, as $\rho$ increases, which can support higher rates for its associated D-BSs. The average sum rate increases by $27\%$ for the proposed cooperative mmW-MBN, as $\rho$ increases from $0.6$ to $1$.}
\vspace{0em}

\subsection{Statistics of the achievable rate for the mmW-MBN}
\begin{figure}[!t]
	\centering
	\centerline{\includegraphics[width=\columnwidth]{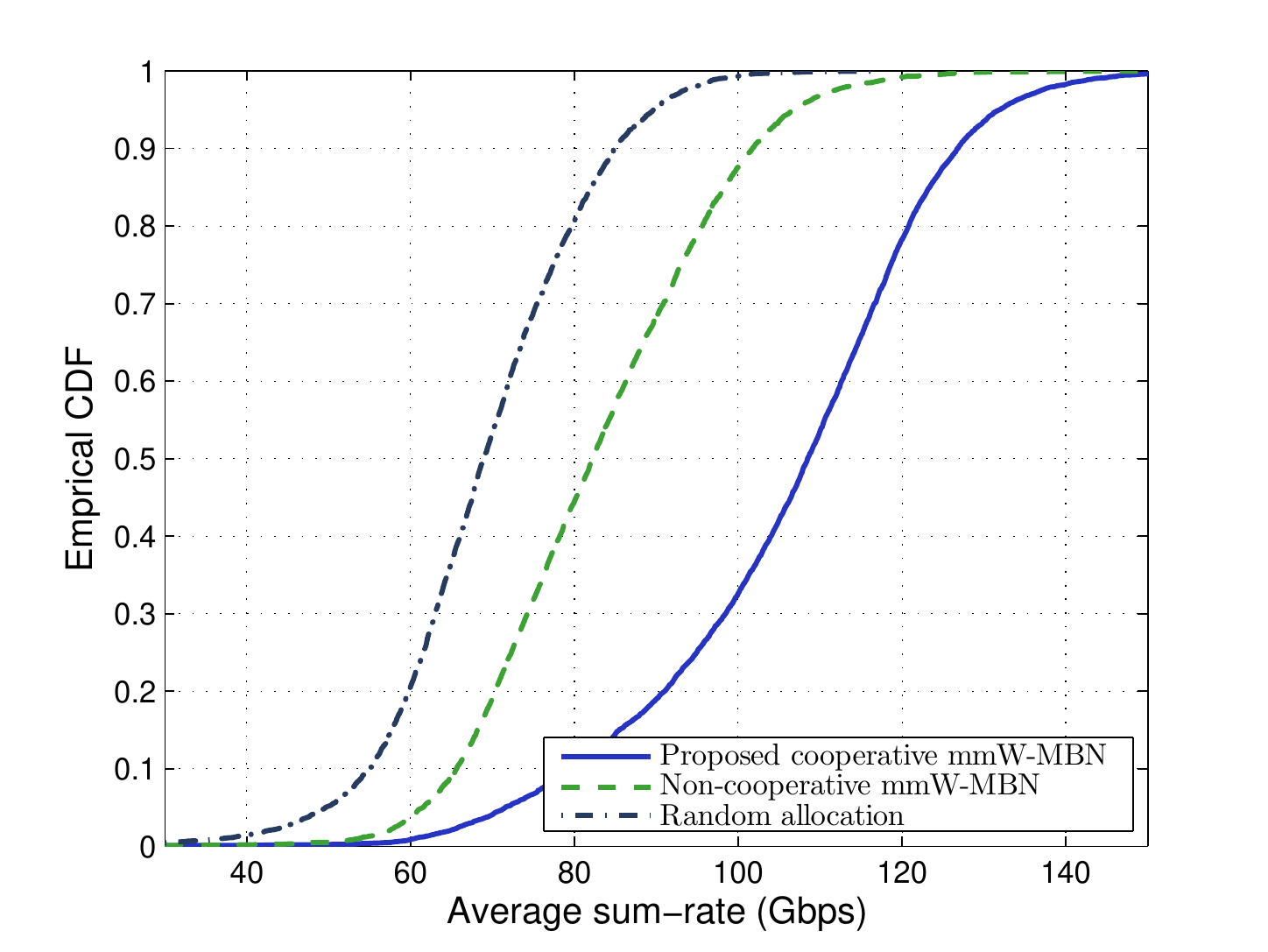}}\vspace{-.2cm}
	\caption{\small The CDF of the average sum rate resulting from the proposed cooperative mmW-MBN, the non-cooperative baseline, and the random allocation approach.}\vspace{0em}
	\label{ffig6}
\end{figure}\vspace{0em}

Fig. \ref{ffig6} shows the cumulative distribution function (CDF) of the average sum rate for the proposed cooperative approach, compared to the non-cooperative and random allocation schemes, for $N=5$ MNOs and $M=60$ SBSs. The results show that the proposed cooperative approach substantially improves the statistics of the average sum rate. For example, Fig. \ref{ffig6} shows that the probability of achieving a $80$ Gbps target sum rate is $90\%$, $36\%$, and $20\%$, respectively, for the proposed approach, the non-cooperative scheme, and the random allocation. 

\begin{figure}[!t]
	\centering
	\centerline{\includegraphics[width=\columnwidth]{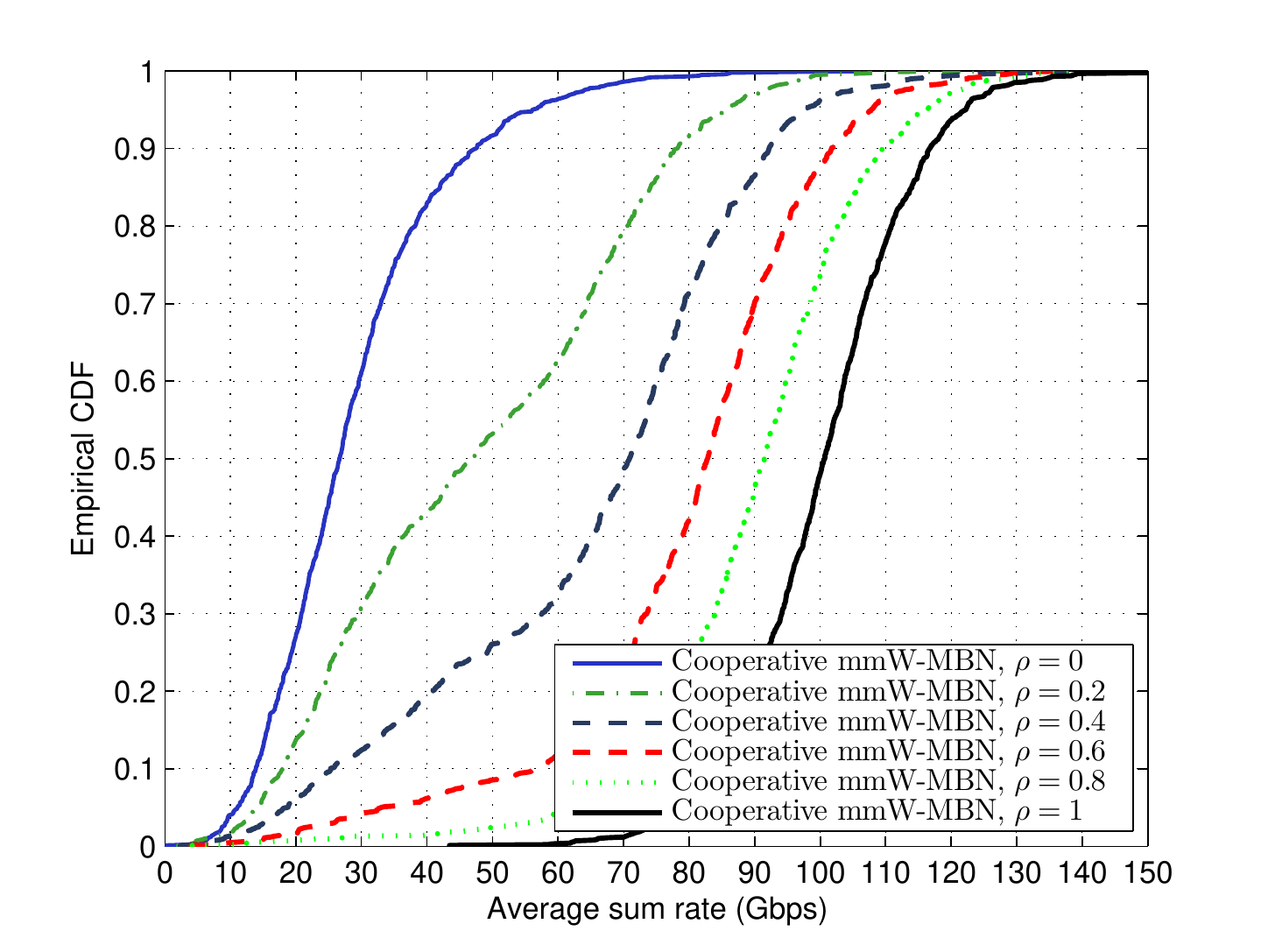}}\vspace{-.2cm}
	\caption{\small The empirical CDF of the sum rate for different average LoS probabilities.}\vspace{0em}
	\label{ffig7}
\end{figure}\vspace{0em}

\textcolor{black}{
Fig. \ref{ffig7} shows the empirical CDF of the average sum rate for different average LoS probabilities, for $N=5$ MNOs and $M=20$ SBSs. From this figure, we can see that  severe blockage with small $\rho$ significantly degrades the performance of the mmW-MBN. Interestingly, we can observe that with $\rho=1$, the average sum rate does not fall below the $40$ Gbps. However, as the probability of LoS decreases to $0.2$, the probability of the average sum rate be less than $40$ Gbps is $42 \%$.}
\textcolor{black}{
\subsection{Economics of the proposed mmW-MBN framework}
Fig. \ref{ffig8} provides a design guideline to manage pricing and the cost of the cooperative mmW-MBN for the MNOs. In this figure, the cost of cooperation per MNO is shown as the price per sub-channel $q$ and the weighting parameter $\kappa$ vary, for $N=3$ MNOs and $M=15$ SBSs. The weighting parameter $\kappa_m$, which is first defined in \eqref{opt1:a} allows each MNO to control the cost of its backhaul network, with respect to $q$ that is determined by other MNOs. \emph{Here, we explicitly define the backhaul cost for an MNO $n \in \mathcal{N}$, as the total money that MNO $n$ must pay to other MNOs for receiving backhaul support to the SBSs in $\mathcal{M}_n$}. A larger $\kappa_m$ implies that the MNO has less incentive to cooperate with other MNOs. Hence, as shown in Fig. \ref{ffig8}, no cooperation will happen between MNOs, as both $q$ and $\kappa_m$ increase, labeled as the \emph{no cooperation} region. As an example, if the budget of an arbitrary MNO $n$ is $\$500$ and $q=\$10$ is chosen by other MNOs, from Fig. \ref{ffig8}, we can see that MNO $n$ must choose $\kappa_m \geq 40\, \text{Mbps}/\$$ in order to keep the cost less than its budget. In addition, Fig. \ref{ffig8} will provides a systematic approach to determine a suitable pricing mechanism for an MNO, if the model parameter $\kappa_m$ and the budget for other MNOs are known. This initial result can be considered as a primary step towards more complex models, which may consider dynamic pricing policies and competing strategies for MNOs.}
\begin{figure}[!t]
	\centering
	\centerline{\includegraphics[width=\columnwidth]{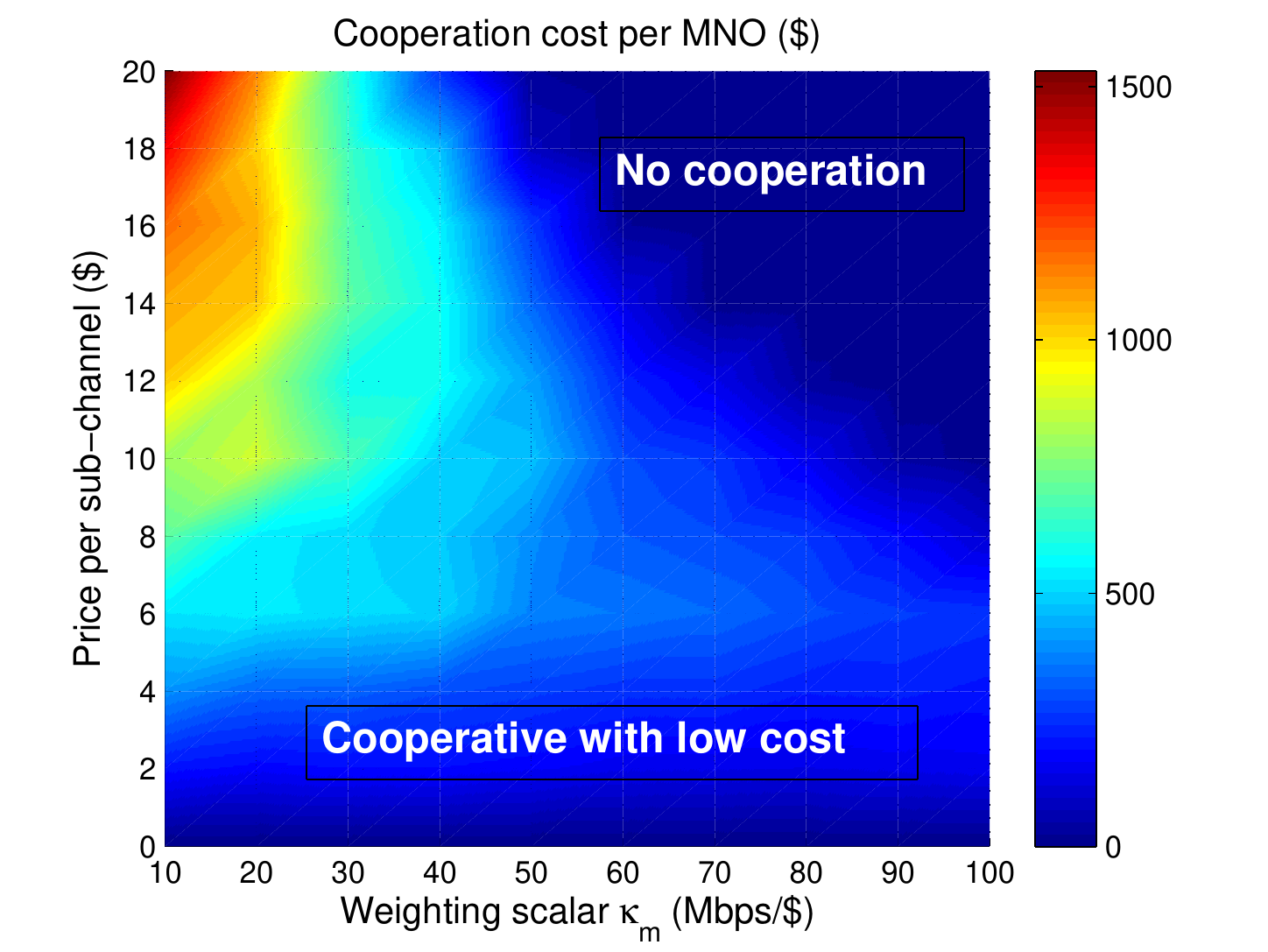}}\vspace{-.2cm}
	\caption{\small Cooperation cost per MNO as a function of both sub-channel price and the weighting parameter $\kappa_m$.}\vspace{0em}
	\label{ffig8}
\end{figure}
\textcolor{black}{
Moreover, we note that the economic gains of the proposed cooperative framework are indirectly reflected in the performance gains of the proposed scheme, compared to the non-cooperative and random allocation approaches, as shown in Figs. \ref{ffig3}-\ref{ffig6}. Such an increase in the data rate of the backhaul network, resulting from the cooperative framework, will provide additional revenues for the MNOs, either by offering services with higher QoS to the users, or by increasing the users served by each SBS. Here, we note that the revenue for each MNO explicitly depends on the cost of maintenance per SBS, leasing the spectrum, deployment of SBSs, providing power supply for SBSs, service plans by MNOs and other specific metrics that may differ from one geographical area to another. Therefore, there is no direct and general mechanism to map the physical layer metrics, such as rate into revenue. However, such a mapping is definitely being used by the economic departments of global operators to define their KPI performance metrics.}

\textcolor{black}{Consequently, in Fig. \ref{ffig8}, we have shown the robustness of the proposed framework with regard to the pricing mechanisms. In fact, we have shown that the proposed resource management framework allows MNOs to choose whether to cooperate or not, depending on the system metrics, including the rate, their available budget, and the pricing policy by other~MNOs.}
\vspace{-0em} 
\textcolor{black}{\subsection{Snapshot of the mmW-MBN}}

\begin{figure*}[t!]
	\centering
	\begin{subfigure}[b]{7cm}
		\includegraphics[width=\textwidth]{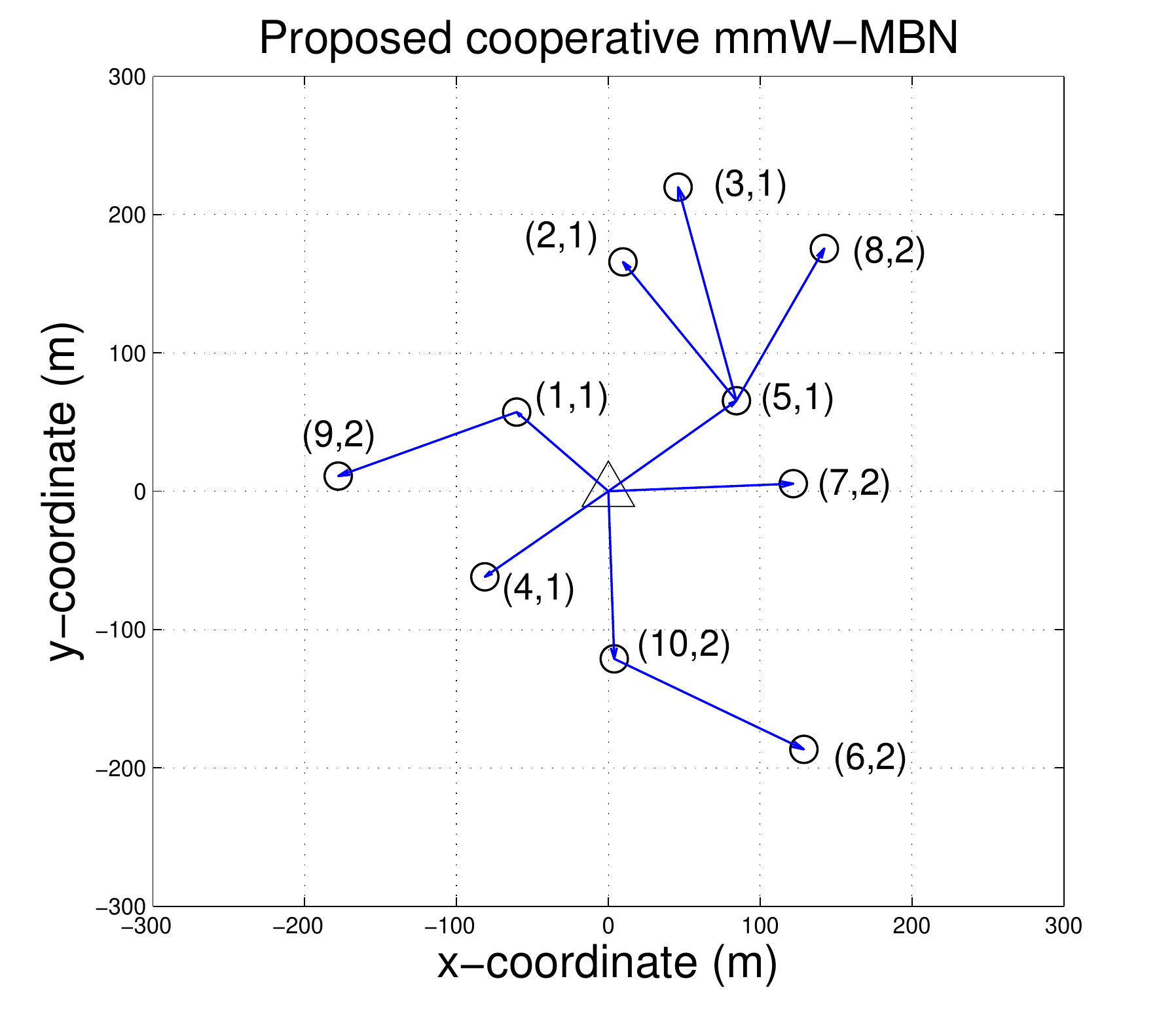}\vspace{-0.2cm}
		\caption{Proposed cooperative mmW-MBN}
		\label{snapshot1}
	\end{subfigure}
	~\hspace{-.4cm} 
	\begin{subfigure}[b]{7cm}
		\includegraphics[width=\textwidth]{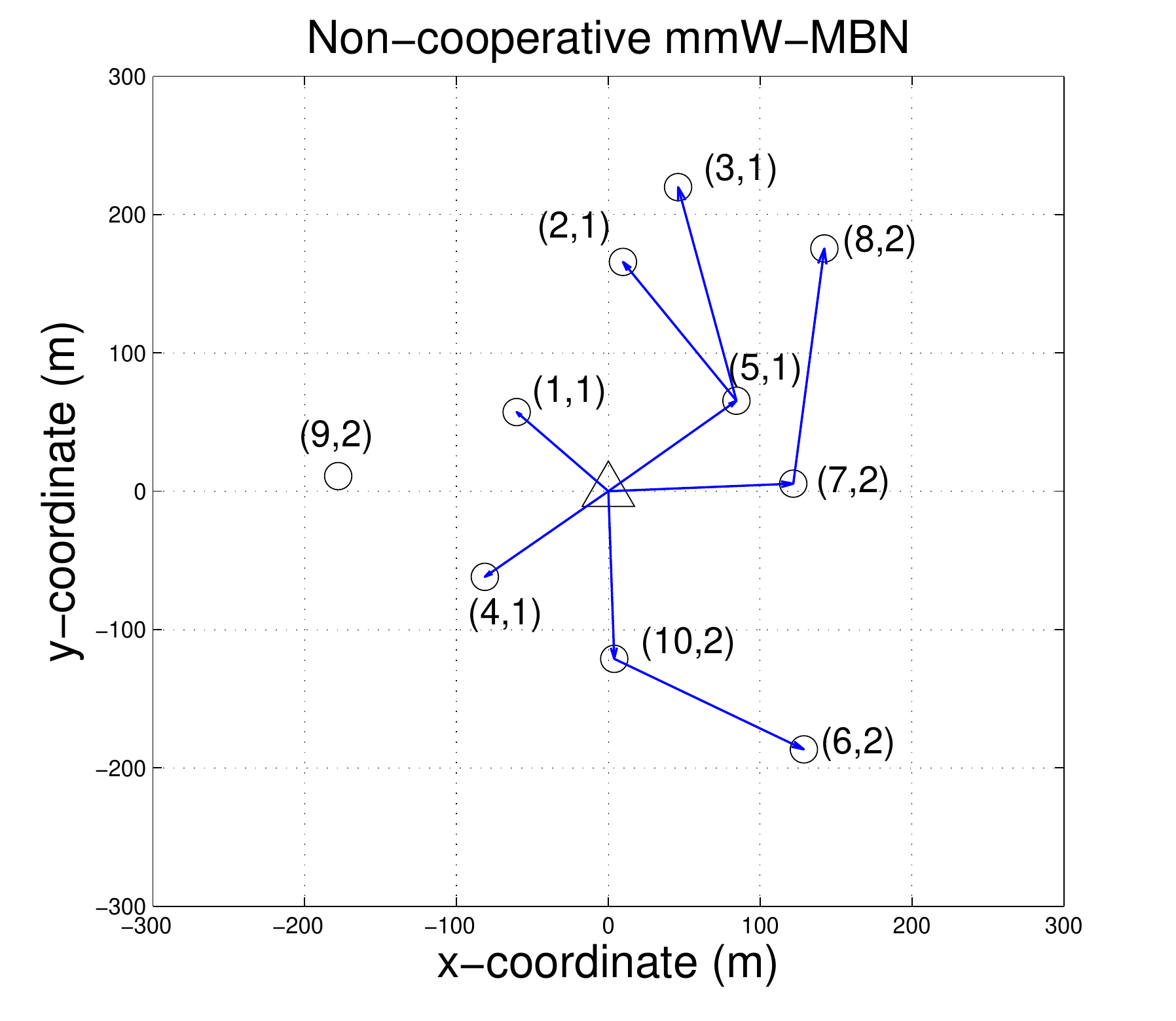}\vspace{-0.2cm}
		\caption{Non-cooperative mmW-MBN}
		\label{snapshot2}
	\end{subfigure}
	\vspace{-.1cm} 
	\caption{\small A snapshot of multi-hop mmW backhaul network via the proposed cooperative scheme and the non-cooperative baseline approach.}\label{snapshot}
\end{figure*}

Fig. \ref{snapshot} shows a snapshot of the mmW multi-hop backhaul network (mmW-MBN) for both the proposed cooperative scheme and the non-cooperative baseline approach. In this figure, each circle shows a small cell base station (SBS) and corresponding pair $(m,n)$ means SBS $m$ belongs to MNO $n$ ($m \in \mathcal{M}_n$). Moreover, the MBS is shown by a triangle. For illustration purposes, we show the network for $N=2$ MNOs and a total of $M=10$ SBSs, and the quota for each A-BS is $Q_m=5$.

From Fig. \ref{snapshot2}, we first observe that  SBS $9 \in \mathcal{M}_2$ is not connected to the non-cooperative mmW-MBN, since no other SBS belonging to MNO $2$ is located within SBS $9$'s communication range. That is, MNO $2$ must increase the density of its SBSs to provide ubiquitous backhaul connectivity. However, deploying additional SBSs will increase the costs for the MNO, including site rental costs, power consumption, and cell maintenance, among others. In contrast, the proposed cooperative scheme provides backhaul support for the SBS $9 \in \mathcal{M}_2$ via SBS $1 \in \mathcal{M}_1$ that belongs to the MNO $1$, as shown in Fig. \ref{snapshot1}. Second, Fig. \ref{snapshot} shows that SBS $8 \in \mathcal{M}_2$ is connected to A-BSs $7 \in \mathcal{M}_2$ and $5 \in \mathcal{M}_1$, respectively, in the non-cooperative and proposed cooperative mmW-MBNs. We can easily observe that the proposed cooperative scheme provides a shorter path via a two-hop backhaul link for the SBS $m=8$, compared to the non-cooperative approach.
\vspace{-0em}
\subsection{Complexity analysis}
\begin{figure}[!t]
	\centering
	\centerline{\includegraphics[width=10cm]{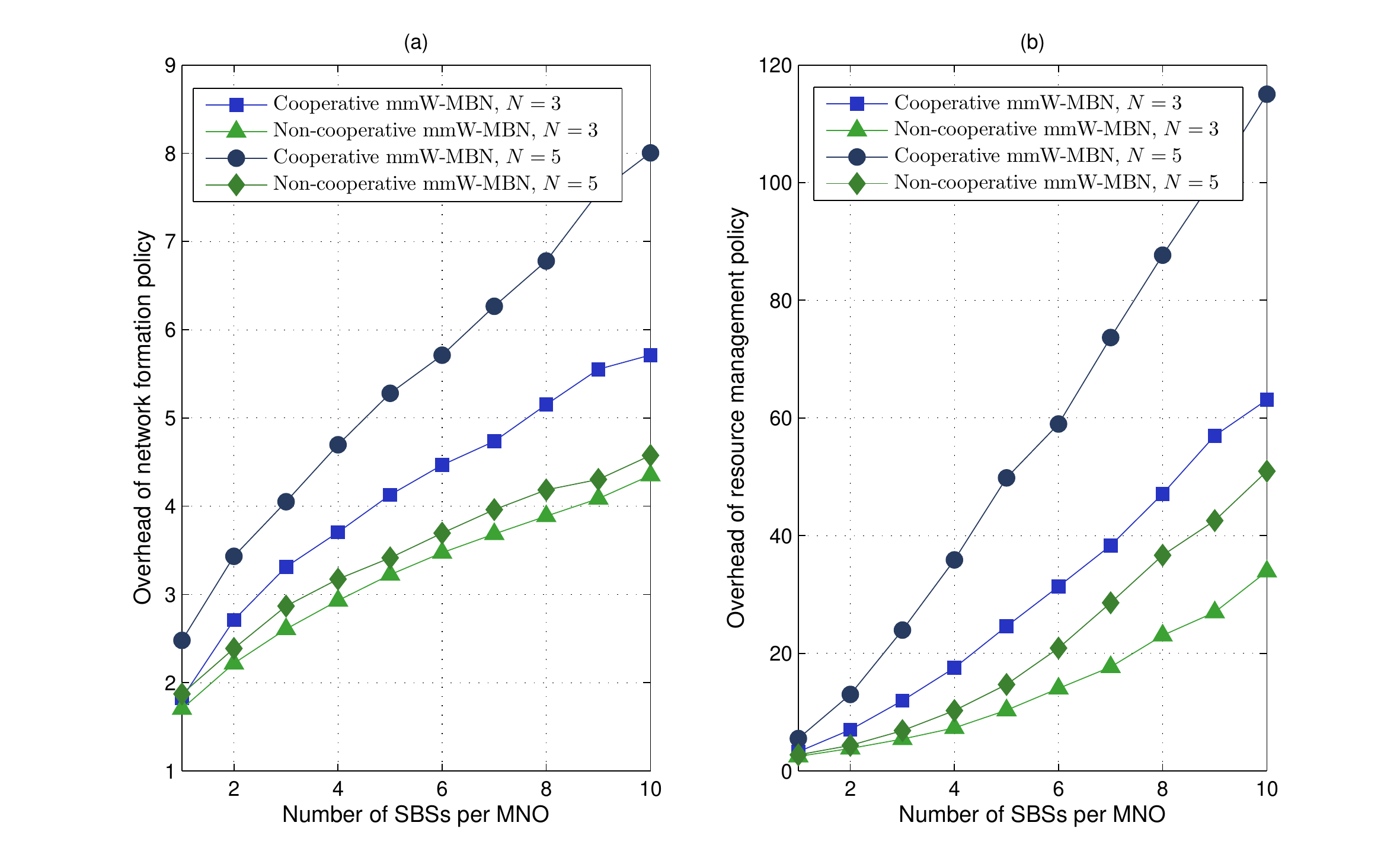}}\vspace{-.5cm}
	\caption{\small The average overhead of the network formation and resource allocation algorithms.}\vspace{-1em}
	\label{sim2}
\end{figure}\vspace{0em}
In Fig. \ref{sim2}, the average signaling overhead of the proposed network formation and the resource allocation algorithms are analyzed, respectively, in (a) and (b). Here, the overhead captures the number of messages that must be exchanged between A-BSs and D-BSs. Fig. \ref{sim2}.a shows that the overhead of the proposed network formation policy increases with the number of SBS per MNOs, since the sets of A-BSs and D-BSs grow as more SBSs are deployed. However, we can see that the algorithm converges fast for all network sizes. Moreover, it can be observed that the proposed cooperative approach increases the overhead by $28 \%$ for $M=18$ SBSs. This is because the proposed approach allows each SBS to communicate with more number of SBSs, compared to the non-cooperative scheme. Similarly, in Fig. \ref{sim2}.b, the overhead of the resource allocation algorithm increases as the number of SBSs increases. 
Regarding the complexity of the proposed approach, we note the followings:1) an optimal solution will require an exponential complexity which is not tractable for dense mmW network deployments, while the proposed approach yields a close-to-optimal performance, as shown in Fig. \ref{ffig4}, while requiring a manageable signaling overhead, 2) in this work, we have considered MBS as the only gateway, with a fiber backhaul, to the core network. Therefore, the scenario that is considered in our simulation is an extreme case, as in practice, there will be more than one gateway for up to $M=65$ SBSs. Clearly, increasing the number of gateways will reduce the size of the problem, i.e., the number of SBSs to be managed, and 3) communication signals required by the proposed scheme will be incorporated within the common control signals of the system.
\section{Conclusion}\label{conclusion}
In this paper, we have proposed a novel distributed backhaul management approach for analyzing the problem of resource management in multi-hop mmW backhaul networks. In particular, we have formulated the problem within a matching-theoretic framework composed of two, dependent matching games: a network formation game and a resource management game. For the network formation game, we have proposed a deferred acceptance-based algorithm that can yield a two-sided stable, Pareto optimal matching between the A-BSs and D-BSs. This matching represents the formation of the multi-hop backhaul links. Once the network formation game is determined, we have proposed a novel algorithm for resource management that allocates the sub-channels of each A-BSs to its associated D-BSs. We have shown that the proposed resource management algorithm is guaranteed to converge to a two-sided stable and Pareto optimal matching between the sub-channels and the D-BSs. Simulation results have shown that the proposed cooperative backhaul framework provides substantial performance gains for the network operators and incentivizes sharing of the backhaul links.

\bibliographystyle{IEEEtran}
\bibliography{references}

\begin{IEEEbiography}[{\includegraphics[width=1in,height=1.25in,clip,keepaspectratio]{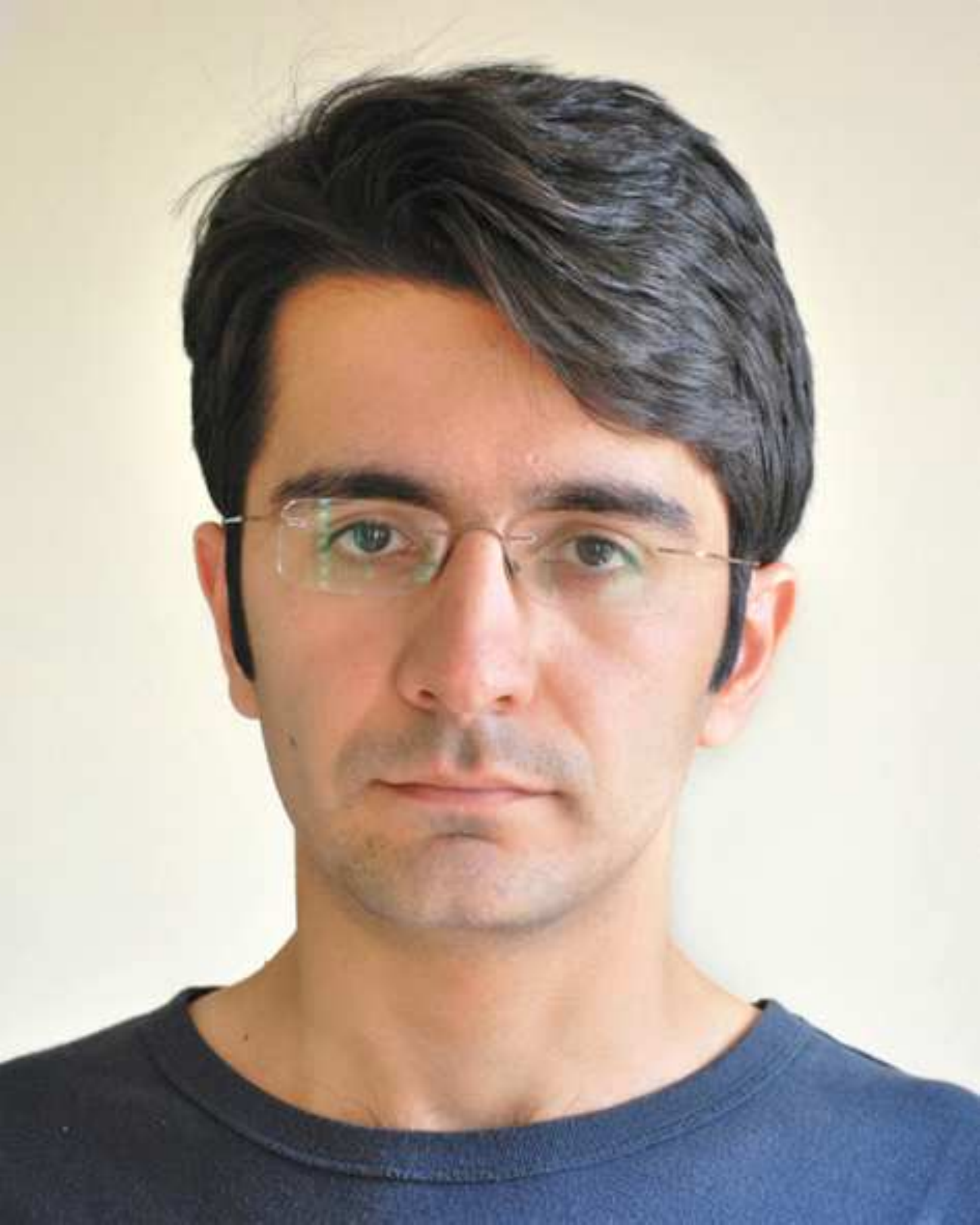}}]{Omid Semiari}(S'14)  received the B.Sc. and M.Sc. degrees in communication systems from University of Tehran in 2010 and 2012, respectively. He is currently a PhD candidate at the Bradly department of Electrical and Computer Engineering at Virginia Tech. In 2014, he has worked as an intern
	at Bell Labs, on anticipatory, context-aware resource management. In 2016, he has joined Qualcomm CDMA Technologies (QCT) for a summer internship, working on LTE-Advanced modem design. Mr. Semiari is the recipient of
	several research fellowship awards, including DAAD (German Academic Exchange Service) scholarship and NSF
	student travel grant. He has actively served as a reviewer for 
	flagship IEEE Transactions and conferences and participated as the technical program committee (TPC) member for a variety of workshops at IEEE conferences, such as ICC and GLOBECOM. His research interests
	include wireless communications and networking, millimeter wave communications, context-aware
	resource allocation, matching theory, and signal processing.\vspace{-1cm}
\end{IEEEbiography}

\begin{IEEEbiography}[{\includegraphics[width=1in,height=1.25in,clip,keepaspectratio]{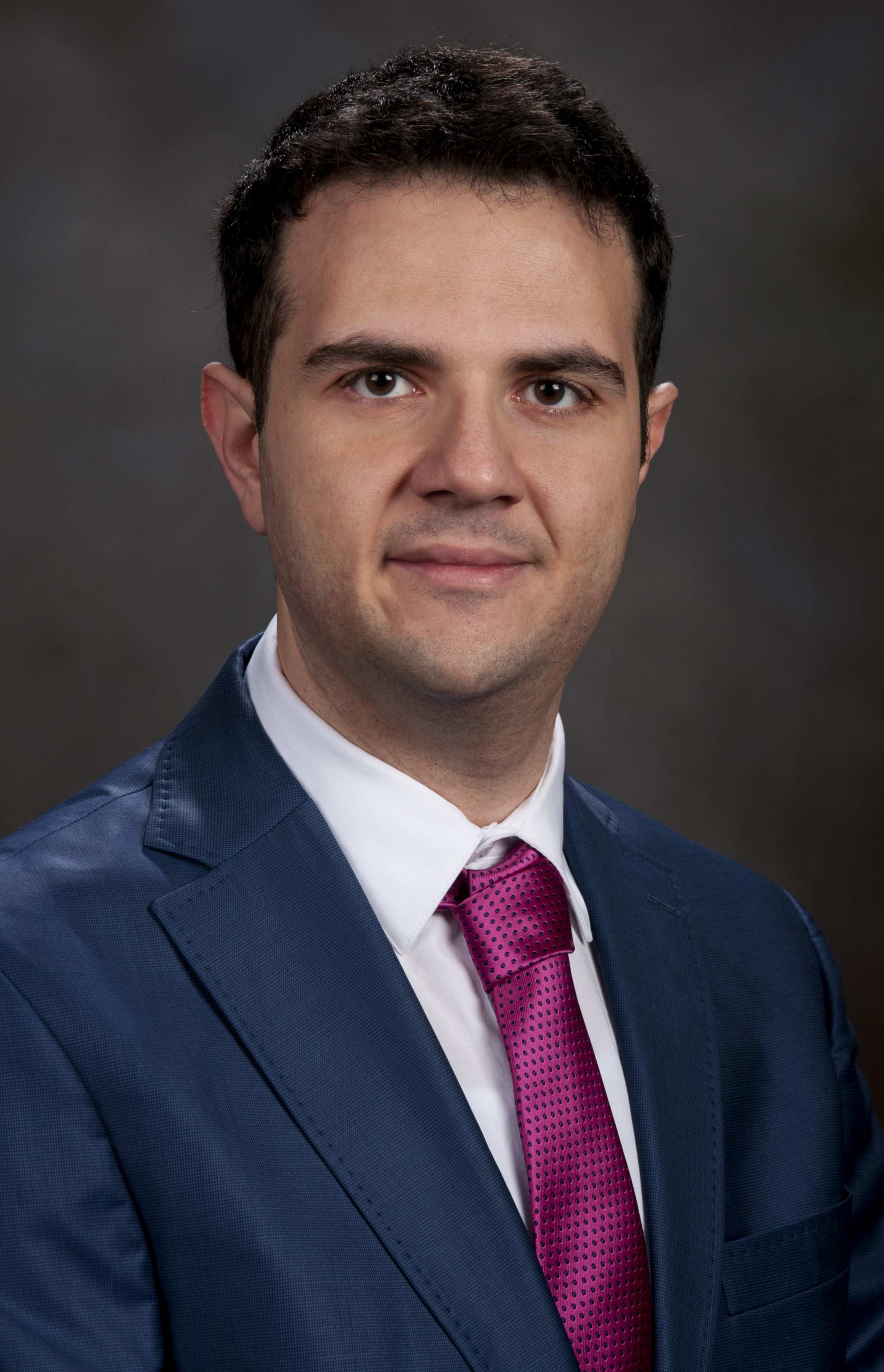}}]{Walid Saad}(S'07, M'10, SM'15) received his Ph.D degree from the University of Oslo in 2010. Currently, he is Associate Professor at the Department of Electrical and Computer Engineering at Virginia Tech, where he leads the Network Science, Wireless, and Security (NetSciWiS) laboratory, within the Wireless@VT research group. His research interests include wireless networks, game theory, cybersecurity, unmanned aerial vehicles, and cyber-physical systems. Dr. Saad is the recipient of the NSF CAREER award in 2013, the AFOSR summer faculty fellowship in 2014, and the Young Investigator Award from the Office of Naval Research (ONR) in 2015. He was the author/co-author of five conference best paper awards at WiOpt in 2009, ICIMP in 2010, IEEE WCNC in 2012, IEEE PIMRC in 2015, and IEEE SmartGridComm in 2015. He is the recipient of the 2015 Fred W. Ellersick Prize from the IEEE Communications Society. In 2017, Dr. Saad was named College of Engineering Faculty Fellow at Virginia Tech. From 2015 – 2017, Dr. Saad was named the Steven O. Lane Junior Faculty Fellow at Virginia Tech. He currently serves as an editor for the IEEE Transactions on Wireless Communications, IEEE Transactions on Communications, and IEEE Transactions on Information Forensics and Security.
\end{IEEEbiography}

\begin{IEEEbiography}[{\includegraphics[width=1in,height=1.25in,clip,keepaspectratio]{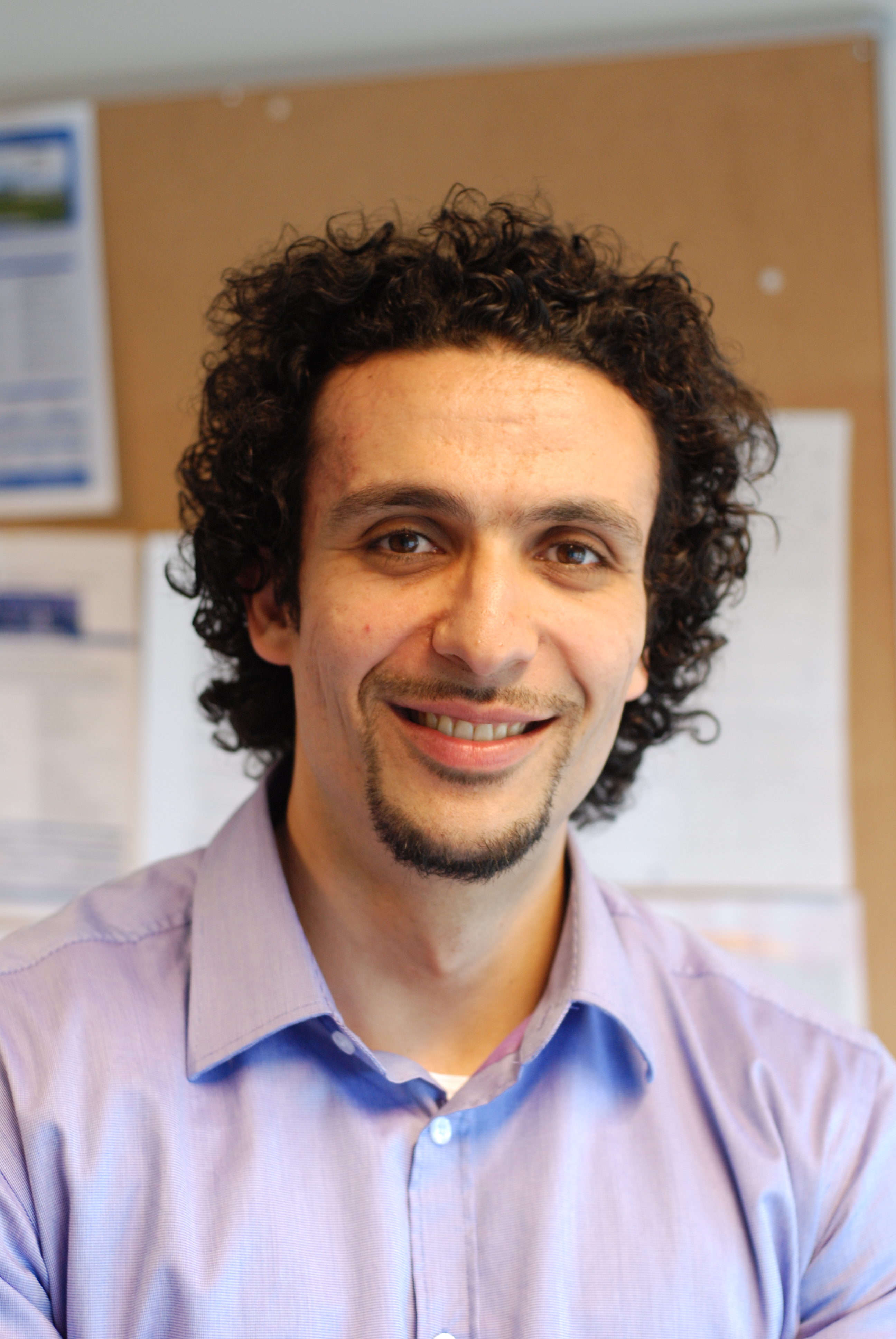}}]{Mehdi Bennis} (Senior Member, IEEE) received his M.Sc. degree in 
	Electrical Engineering jointly from the EPFL, Switzerland and the 
	Eurecom Institute, France in 2002. From 2002 to 2004, he worked as a 
	research engineer at IMRA-EUROPE investigating adaptive equalization 
	algorithms for mobile digital
	TV. In 2004, he joined the Centre for Wireless Communications (CWC) at 
	the University of Oulu, Finland as a research scientist. In 2008, he was 
	a visiting researcher at the Alcatel-Lucent chair on flexible radio, 
	SUPELEC. He obtained his Ph.D. in December 2009 on spectrum sharing for 
	future mobile cellular systems. Currently Dr. Bennis is an Adjunct 
	Professor at the University of Oulu and Academy of Finland research 
	fellow. His main research interests are in radio resource management, 
	heterogeneous networks, game theory and machine learning in 5G networks 
	and beyond. He has co-authored one book and published more than 100 
	research papers in international conferences, journals and book 
	chapters. He was the recipient of the prestigious 2015 Fred W. Ellersick 
	Prize from the IEEE Communications Society, the 2016 Best Tutorial 
	Prize from the IEEE Communications Society and the 2017 EURASIP Best paper Award for the Journal of wireless communications and networks..
	Dr. Bennis serves as an editor for the IEEE Transactions on Wireless Communication
\end{IEEEbiography}

\end{document}